\newcommand{\mc}{\mathcal}
\newcommand{\cp}{\times}
\newcommand{\Norm}[1]{\left\lvert\left\lvert #1 \right\rvert\right\rvert}
\newcommand{\bol}{\boldsymbol}
\newcommand{\abs}[1]{\left\lvert{#1}\right\rvert}
\newcommand{\lr}[1]{\left({#1}\right)}
\newcommand{\lrs}[1]{\left[{#1}\right]}
\newcommand{\lrc}[1]{\left\{{#1}\right\}}
\newcommand{\mf}{\mathfrak}
\newcommand{\p}{\partial}
\newcommand{\ti}[1]{\textit{#1}}
\newcommand{\tb}[1]{\textbf{#1}}
\newtheorem{remark}{\textit{Remark}}
\newtheorem{theorem}{\textit{Theorem}}
\newtheorem{proposition}{\textit{Proposition}}
\begin{document}

\title{
A Reduced Ideal MHD System 
for Nonlinear
Magnetic Field Turbulence in Plasmas with 
Approximate Flux Surfaces 
}
\author[1]{Naoki Sato} 
\author[2]{Michio Yamada}
\affil[1]{National Institute for Fusion Science, \protect\\ 322-6 Oroshi-cho Toki-city, Gifu 509-5292, Japan \protect\\ Email: sato.naoki@nifs.ac.jp}
\affil[2]{Research Institute for Mathematical Sciences, \protect\\ Kyoto University, Kyoto 606-8502, Japan
\protect \\ Email: yamada@kurims.kyoto-u.ac.jp}
\date{\today}
\setcounter{Maxaffil}{0}
\renewcommand\Affilfont{\itshape\small}

    \maketitle
    \begin{abstract}
    This paper studies the nonlinear evolution of magnetic field turbulence in proximity of steady ideal MHD configurations  characterized by a small electric current, a small plasma flow, and approximate flux surfaces, a physical setting that is relevant for  plasma confinement in stellarators. The aim is to
    gather insight on magnetic field dynamics, to elucidate accessibility and stability of three-dimensional MHD equilibria, as well as to formulate practical methods to compute them. 
    Starting from the ideal MHD equations, a reduced dynamical system of two coupled nonlinear PDEs for the flux function and the angle variable associated with the Clebsch representation of the magnetic field is obtained. It is shown that under suitable boundary and gauge conditions such reduced system preserves magnetic energy, magnetic helicity, and total magnetic flux. 
    The noncanonical Hamiltonian structure of the reduced system is identified, 
    and used to show the nonlinear stability of steady solutions against perturbations involving only one Clebsch potential. 
    The Hamiltonian structure is also applied to 
    construct a dissipative dynamical system through the method of double brackets. 
    This dissipative system enables the computation of MHD equilibria by minimizing energy until a critical point of the Hamiltonian is reached. 
    Finally, an iterative scheme based on the alternate solution of the two steady equations in the reduced system is proposed as a further method to compute MHD equilibria. 
    A theorem is proven which states that the iterative scheme converges to a nontrivial MHD equilbrium as long as solutions exist at each step of the iteration. 
    \end{abstract}






\section{Introduction}
This study is concerned with the dynamics of the magnetic field around a magnetohydrodynamics (MHD) equilibrium \cite{Kruskal}
\begin{equation}
\lr{\nabla\cp\bol{B}}\cp\bol{B}=\mu_0\nabla P,~~~~\nabla\cdot\bol{B}=0~~~~{\rm in}~~\Omega.\label{MHDE}  
\end{equation}
In this equation $\bol{B}\lr{\bol{x}}$ denotes the equilibrium magnetic field in a smooth bounded domain $\Omega\subset\mathbb{R}^3$, $\bol{x}=\lr{x,y,z}$ Cartesian coordinates, $\mu_0$ the vacuum permeability, and $P\lr{\bol{x}}$ the equilibrium pressure field. For the purpose of the present paper, 
the dynamics around \eqref{MHDE} is governed by the ideal MHD equations in $\Omega$, 
\begin{subequations}
\begin{align}
\rho\frac{\p\bol{u}}{\p t}=&-\rho\bol{u}\cdot\nabla\bol{u}+\frac{1}{\mu_0}\lr{\nabla\cp\bol{B}}\cp\bol{B}-{\nabla P},\label{MHD1-1}\\
\frac{\p\bol{B}}{\p t}=&-\nabla\cp\bol{E},\label{MHD1-2}\\
\frac{\p\rho}{\p t}=&-\nabla\cdot\lr{\rho\bol{u}},\\
\nabla\cdot\bol{B}=&0.
\end{align}\label{MHD1}
\end{subequations}
Here, $\bol{u}\lr{\bol{x},t}$, 
$\bol{B}\lr{\bol{x},t}$, $\rho\lr{\bol{x},t}$,  $P\lr{\bol{x},t}$, and $\bol{E}\lr{\bol{x},t}$ denote the time-dependent velocity field, magnetic field,  plasma mass density, pressure field, and electric field respectively.
The closure of system \eqref{MHD1} can be obtained by choosing an equation of state relating $P$ to the other fields, and by determining $\bol{E}$ from the electron fluid momentum equation. These aspects will be discussed in detail later.  
The MHD equilibrium \eqref{MHDE} can be obtained from system \eqref{MHD1} by setting time derivatives to zero and taking a vanishing plasma flow $\bol{u}=\bol{0}$.  

Despite its relevance for the confinement of magnetized plasmas  
and the development of nuclear fusion reactors known as stellarators, 
a general theory concerning the existence of regular solutions $\lr{\bol{B},P}$ of the MHD equilibrium equations \eqref{MHDE} is not available at present \cite{LoSurdo}. 
This is because the characteristic surfaces associated with the first-order system of PDEs \eqref{MHDE} depend on the unknown $\bol{B}$ \cite{YosYam,GradRed}. 
The existence of regular solutions is known for the special cases in which the pressure is constant or the fields $\lr{\bol{B},P}$ are invariant under some combination of continuous Euclidean isometries of $\mathbb{R}^3$ (rotations and translations). In the first case the magnetic field is a Beltrami field \cite{YosGiga}. In the second case, equation \eqref{MHDE} reduces to the Grad-Shafranov equation, a nonlinear second order elliptic PDE for the flux function  \cite{GradToroidal,Eden1,Eden2}. 
Both cases are however not relevant for stellarators \cite{Hel}, 
which consist of toroidal vessels without trivial symmetries 
surrounded by coils with complex shapes whose purpose is to 
generate the field line twist required to 
minimize particle losses caused by cross-field drifts. 
In principle, stellarators achieve steady plasma confinement mostly through an externally produced vacuum magnetic field, and are therefore more suitable for continued operation compared to an axially symmetric tokamak where the field line twist is obtained by driving an electric current within the plasma.  
However, the lack of axial symmetry results in the breaking of  
conservation of vertical angular momentum, a fact that deteriorates the quality of plasma confinement. For this reason, in addition to \eqref{MHDE} the equilibrium magnetic field within a stellarator must satisfy additional conditions, such as quasisymmetry, a property that ensures   
particle confinement by constraining particle orbits close to a given flux surface \cite{Rod21,Landre,SatoSciRep}. 

In this context, the aim of the present paper is (i) to 
obtain a closed set of reduced equations preserving the Hamiltonian structure \cite{MorrisonRMP,Abdel} of ideal MHD and describing the nonlinear evolution of the magnetic field 
in proximity of MHD equilibria \eqref{MHDE} and in a physical regime relevant for stellarator plasmas, (ii)   
to use the derived equations to elucidate the stability properties of such equilibria,  
and (iii) to apply the derived equations to formulate dissipative and iterative schemes to construct nontrivial 
MHD equilibria \eqref{MHDE} with nested flux surfaces and a non-vanishing pressure gradient in toroidal domains of arbitrary shape. 
In addition to providing a toy model of magnetic field turbulence in a physically relevant setting, we conjecture that the derived results may 
serve as a starting point for a mathematical proof of 
existence of nontrivial MHD equilibria \eqref{MHDE} in toroidal domains without Euclidean symmetries. 


The present paper is organized as follows.
In section 2 the ideal MHD equations \eqref{MHD1} 
are reduced according to 
an ordering in which 
the plasma flow $\bol{u}$, the electric current $\nabla\cp\bol{B}$, and time derivatives $\p/\p t$ are small, and the pressure field $P$ is related to the mass density $\rho$ and the velocity field $\bol{u}$ by a generalized Bernoulli principle.  
This ordering also implies the existence of approximate flux surfaces $\Psi$ for the magnetic field. This fact is used to enforce at leading order a Clebsch representation  \cite{YosClebsch,YosEpi2D} of the magnetic field  $\bol{B}=\nabla\Psi\cp\nabla\Theta$ with $\Psi$ the flux function and $\Theta$ a   multi-valued (angle) variable. 
In the reduced system, the dynamics of the magnetic field is thus described by a pair of coupled equations for the 
two Clebsch potentials $\Psi$ and $\Theta$.

In section 3 it is shown that under suitable boundary conditions and gauge conditions for the magnetic vector potential the reduced dynamics preserves magnetic energy, magnetic helicity, and total magnetic flux. 
These conservation laws are then applied to describe steady states in terms of critical points of a   functional of $\Psi$ and $\Theta$ given by a linear combination of magnetic energy and total magnetic flux. This variational formulation can be physically interpreted  in analogy with  
Taylor relaxation \cite{Taylor,Taylor2}  
in which magnetic energy is minimized under the constraint of magnetic helicity \cite{Wolt}. Here,  functionals involving higher order derivatives of the dynamical variables are dissipated at a faster rate by non-ideal (dissipative) mechanisms \cite{YosMahPRL}.

In section 4 the noncanonical Hamiltonian  structure of the reduced equations is identified 
in terms of a Poisson bracket and a Hamiltonian functional \cite{Morrison89,Little89}. 
In particular, it is shown that the Poisson bracket  satisfies all the Poisson bracket axioms, including the Jacobi identity \cite{Olver}.

In section 5 the Hamiltonian structure obtained in section 4 is used to prove 
that steady solutions of the reduced dynamics are nonlinearly stable \cite{Holm,Tronci,Rein,Arnold} against turbulent fluctuations involving only one of the two Clebsch potentials. Here, we recall that
a positive second variation of the Hamiltonian is not sufficient to guarantee nonlinear stability. In general, a norm on the space of solutions must be found such that the deviation of the perturbed solution at a given time 
is bound by the discrepancy 
of initial conditions in the prescribed topology. 
The type of nonlinear stability shown here effectively constrains the 
deviation   
of the perturbed Clebsch potential from initial conditions  
on the level sets of the other unperturbed Clebsch potential.

In section 6 the method of double brackets \cite{PJMdb1,PJMdb2} 
is used to formulate a dissipative dynamical systems for the Clebsch potentials $\Psi$ and $\Theta$ with the property that, instead of being constant, the Hamiltonian is progressively dissipated. 
This pair of equations has the structure of coupled diffusion equations. 
Double bracket dynamics is  obtained by applying twice  the Poisson bracket  
and represents an effective tool to compute energy minima while preserving the 
Casimir invariants that span the kernel of the Poisson bracket \cite{Furukawa,Vallis}. 
The derived dissipative dynamical system may therefore be applied to compute MHD equilibria \eqref{MHDE} corresponding to critical points of the Hamiltonian in a dynamical fashion. 

In section 7 a  second iterative scheme based on the alternate solution of the two steady equations of the reduced dynamical system obtained in section 2 is discussed. Here, a theorem is proven 
which states that the iterative scheme converges to a nontrivial MHD
equilbrium (with a non-vanishing pressure gradient)  as long as solutions exist at each step of the iteration.
The dissipative and iterative schemes obtained in sections 6 and 7 may be regarded as alternative methods to the  
constrained minimizaition of the plasma energy $\int_{\Omega}\lr{\frac{\bol{B}^2}{2\mu_0}+\frac{P}{\gamma-1}}dV$, where $\gamma\geq 0$ is the adiabatic index,  
often used to numerically  compute MHD equilibria \cite{Chodura,Hir}. 

Concluding remarks are given in section 8.








\section{A reduced ideal MHD system for magnetic field turbulence in plasmas with small currents and approximate flux surfaces}
The aim of this section is to derive a reduction of the ideal MHD equations 
that determines the nonlinear evolution of 
magnetic field turbulence within a plasma 
characterized by approximate flux surfaces. 
This system should be appropriate to describe the confinement regime of a tokamak or a stellarator 
provided that flux surfaces exist to some degree throughout time evolution. 
This statement will be made quantitatively precise later. 


We start by considering the ideal MHD equations \eqref{MHD1} 
in a smooth bounded domain $\Omega\subset\mathbb{R}^3$. 
The relationship between the pressure field $P$ and the other field variables will be given later in the form of a generalized Bernoulli principle. 
The relation between the electric field $\bol{E}$ and the other fields is given by a generalized Ohm's law following from the electron fluid momentum equation
\begin{equation}
m_en_e\frac{d\bol{u}_e}{dt}=-e n_e\lr{\bol{E}+\bol{u}_e\cp\bol{B}}-\nabla P_e.\label{emeq}
\end{equation}
Here, $m_e$, $-e$, $n_e\lr{\bol{x},t}$, $\bol{u}_e\lr{\bol{x},t}$, and $P_e\lr{\bol{x},t}$ denote the electron
mass, charge, number density, fluid velocity, and pressure field, and we defined $d/dt=\p/\p t+\bol{u}_e\cdot\nabla$. 
Now recall that 
\begin{equation}
\bol{u}=\frac{\bol{u}_i+\delta\bol{u}_e}{1+\delta},~~~~\rho=m_in\lr{1+\delta},~~~~\bol{u}_e=\bol{u}-\frac{\nabla\cp\bol{B}}{e\mu_0\lr{1+\delta} n},\label{eq3}
\end{equation}
where $m_i$, $\delta=m_e/m_i$, $n_i$, $\bol{u}_i\lr{\bol{x},t}$ are the ion mass, the electron to ion mass ratio, the ion number density, and the ion fluid velocity, and we have assumed quasineutrality $n_i=n_e=n$.
The electron momentum equation \eqref{emeq} therefore gives the generalized Ohm's law
\begin{equation}
\bol{E}=\bol{B}\cp\bol{u}+\frac{m_i}{e\mu_0\rho}\lr{\nabla\cp\bol{B}}\cp\bol{B}+\frac{m_i\lr{1+\delta}}{e\rho}\nabla P_e-\frac{m_e}{e}\frac{d\bol{u}_e}{dt}. \label{E}
\end{equation}
The closure of the MHD system \eqref{MHD1} can thus be obtained by neglecting the last term involving the electron inertia, and by a proper ansatz on the electron pressure $P_e$. 
Indeed, assuming a barotropic equation of state for the electron pressure,
$P_e=P_e\lr{\rho}$, we obtain 
\begin{equation}
\nabla\cp\bol{E}=\nabla\cp\lrs{\lr{\frac{\kappa}{\rho}\nabla\cp\bol{B}-\bol{u}}\cp\bol{B}}, \label{GOL}
\end{equation}
with $\kappa=m_i/e\mu_0$ a physical constant associated with the Hall effect. 
We also demand that the vector fields $\bol{u}$,  $\bol{B}$, and $\nabla\cp\bol{B}$, and the pressure $P$ satisfy the boundary conditions
\begin{equation}
\bol{u}\cdot\bol{n}=0,~~~~\bol{B}\cdot\bol{n}=0,~~~~\nabla\cp\bol{B}\cdot\bol{n}=0,~~~~P={\rm constant}~~~~{\rm on}~~\p\Omega, \label{bc}
\end{equation}
where $\p\Omega$ denotes the boundary of $\Omega$ and $\bol{n}$ the unit outward normal to $\p\Omega$. Notice in particular that the boundary condition for $\nabla\cp\bol{B}$ implies that there is no net current flow across the bounding surface $\p\Omega$, and it is expected to hold true as long as both $\bol{u}_e$ 
and $\bol{u}_i$ are tangent to $\p\Omega$ (recall equation \eqref{eq3}).
We also emphasize that the boundary conditions \eqref{bc} describe what we expect from a physical standpoint. The set of boundary conditions required for the existence of solutions will be described for each set of governing equations when necessary. 

Let $\epsilon>0$ denote a small ordering parameter, $L\sim\Omega^{1/3}$ the characteristic size of the system (e.g. the linear size of a stellarator), and $T$ a reference time scale (for example, a small fraction of the confinement time scale). Assuming $\rho>0$, we start by considering the following ordering conditions 
\begin{subequations}
\begin{align}
&{\frac{T}{L\sqrt{\mu_0\rho}}}\bol{B}\sim 1,\\
&\frac{T}{L}\bol{u}\sim T\nabla\cp\bol{u}\sim \epsilon,\\
&{\frac{T}{\sqrt{\mu_0\rho}}}\nabla\cp\bol{B}\sim \frac{T^2}{\rho 
 L^2}P\sim \frac{T}{\abs{\bol{u}}}\frac{\p\bol{u}}{\p t}\sim \frac{T}{\rho}\frac{\p\rho}{\p t}\sim \frac{T}{\abs{\bol{B}}}\frac{\p\bol{B}}{\p t}\sim \frac{T}{\abs{\bol{B}}}\nabla\cp\bol{E}\sim\epsilon^2
\end{align}\label{ordering}
\end{subequations}
Physically, these requirements 
describe a magnetized plasma regime with small flow and small electric current and where the velocity field and the plasma density evolve slowly  
in response to the turbulent evolution of the magnetic field, which is driven by the non-vanishing of $\nabla\cp\bol{E}$ in equation \eqref{MHD1-2}. 
This regime is relevant for example within stellarators, which 
are designed to minimize internal flows and currents. 
Alternative orderings leading to the same governing equations 
will be described at the end of this section. 
Later we will also see that the smallness of $\nabla\cp\bol{E}$ implies that the magnetic field is endowed with approximate flux surfaces.  
Note that $\abs{\bol{B}}/\sqrt{\mu_0\rho}$ is the Alfv\'en speed, and that
the ordering \eqref{ordering} does not involve conditions on the size of the typical particle number $\rho L^3/m_i$. 
We now make the following ordering prescription, 
\begin{equation}
\frac{T^3}{L^2}\bol{u}\cdot\lr{\frac{1}{\rho}\nabla P+\frac{1}{2}
\nabla\bol{u}^2}\sim \epsilon^4.\label{EoS}
\end{equation}
When the density $\rho=\rho_c\in\mathbb{R}$ is constant 
this relationship can be satisfied through the  Bernoulli principle $\nabla\lr{{P}+\rho_c\bol{u}^2/2+h}=\bol{0}$, where $h$ is any function such that $\bol{u}\cdot\nabla h=0$.
Hence, equation \eqref{EoS}, 
which plays a role analogous to an equation of state, can be interpreted as a generalized Bernoulli principle. 
Notice also that equation \eqref{EoS} arises naturally from the plasma momentum equation \eqref{MHD1-1} when the system is steady and the vorticity $\nabla\cp\bol{u}$ and the electric current $\nabla\cp\bol{B}$ are small (a regime that is particularly relevant for stellarators).
Equations \eqref{GOL} and \eqref{EoS} and the ordering conditions \eqref{ordering} can then be used to reduce system \eqref{MHD1} to the leading order system 
\begin{subequations}
\begin{align}
\rho\lr{\nabla\cp\bol{u}}\cp\bol{u}=&\frac{1}{\mu_0}\lr{\nabla\cp\bol{B}}\cp\bol{B}-\nabla P-\frac{1}{2}\rho\nabla\bol{u}^2,\label{MHD2-1}\\
\bol{u}\cdot\lr{\nabla P+\frac{1}{2}\rho\nabla\bol{u}^2}=&0,\label{MHD2-2}\\
\frac{\p\bol{B}}{\p t}=&\nabla\cp\lrs{\lr{\bol{u}-\frac{\kappa}{\rho}\nabla\cp\bol{B}}\cp\bol{B}},\label{MHD2-3}\\
\nabla\cdot\lr{\rho\bol{u}}=&0,\label{MHD2-4}\\
\nabla\cdot\bol{B}=&0.\label{MHD2-5}
\end{align}\label{MHD2}
\end{subequations}
Observe that in the limit $\bol{u}\rightarrow\bol{0}$ steady solutions of system \eqref{MHD2} are described by the MHD equilibrium equations \eqref{MHDE} and a  barotropic equation of state $P=P\lr{\rho}$.
We also emphasize that, in addition to the induction equation \eqref{MHD2-3}  
and the divergence-free condition \eqref{MHD2-5} for the magnetic field, 
system \eqref{MHD2} comprises 
5 other equations (momentum equation \eqref{MHD2-1}, equation of state \eqref{MHD2-2}, and continuity equation \eqref{MHD2-4}) 
    involving 5 additional variables $\lr{\bol{u},\rho,P}$. 
In particular, given the magnetic field $\bol{B}\lr{\bol{x},t}$
at some instant $t$, the 5 variables $\lr{\bol{u},\rho,P}$ are determined
by the 5 equations \eqref{MHD2-1}, \eqref{MHD2-2}, and \eqref{MHD2-4}, which represent a first order nonlinear system of PDEs.

In the following, we are concerned with 
non-vacuum configurations such that the electric current and the magnetic field are not collinear, i.e. $\lr{\nabla\cp\bol{B}}\cp\bol{B}\neq\bol{0}$.
By dotting equation \eqref{MHD2-1} with $\bol{u}$ and using equation \eqref{MHD2-2} we thus see that 
\begin{equation}
\bol{u}=\alpha\nabla\cp\bol{B}+\beta\bol{B},\label{u}
\end{equation}
where $\alpha\lr{\bol{x},t}$ and $\beta\lr{\bol{x},t}$ are functions determined by system \eqref{MHD2}, provided that it admits solutions.
Equation \eqref{u} implies that the dominant contributions to the flow velocity $\bol{u}$ are either in the direction of the electric current, or along the magnetic field itself.

Next, consider the ordering condition involving $\p\bol{B}/\p t=-\nabla\cp\bol{E}$ in \eqref{ordering} when the electric field is given by the generalized Ohm's law \eqref{GOL} and the velocity field has expression \eqref{u}. 
It follows that 
for all $t\geq 0$ there exists 
a single valued function $\Psi\lr{\bol{x},t}$ such that  
\begin{equation}
\frac{T^2}{L^2\sqrt{\mu_0\rho_0}}\lrs{\lr{\alpha-\frac{\kappa}{\rho}}\lr{\nabla\cp\bol{B}}\cp\bol{B}-\frac{\epsilon}{T}\nabla \Psi}\sim {\epsilon^2}
,\label{FS}
\end{equation}
where the factor $\epsilon/T$ in front of $\Psi$ emphasizes the fact that $\Psi$ 
scales as $\bol{B}\sim L^{-2}\Psi$ and that it 
has dimensions of magnetic flux, and $\rho_0\in\mathbb{R}$, $\rho_0>0$, denotes a reference mass density. 
Equation \eqref{FS} implies that the magnetic field possesses approximate flux surfaces defined by level sets of $\Psi$. 
Recalling the boundary conditions \eqref{bc}, from \eqref{FS} we must also have
\begin{equation}
\bol{n}=\pm\frac{\lr{\nabla\cp\bol{B}}\cp\bol{B}}{\abs{\lr{\nabla\cp\bol{B}}\cp\bol{B}}}=
\pm\frac{\nabla\Psi}{\abs{\nabla\Psi}}+o\lr{\epsilon}~~~~{\rm on}~~\p\Omega,
\end{equation}
where in this notation the sign $\pm$ depends on the orientation of
$\lr{\nabla\cp\bol{B}}\cp\bol{B}$ on $\p\Omega$. 
Taking the square of both sides of this equation shows that $o\lr{\epsilon}=0$ on $\p\Omega$. Hence, 
\begin{equation}
\bol{n}=\pm\frac{\nabla\Psi}{\abs{\nabla\Psi}}~~~~{\rm on}~~\p\Omega,
\end{equation}
This result also implies that $\Psi$ is constant on $\p\Omega$. 

Since the magnetic field is solenoidal, equation \eqref{FS} is sufficient to infer that 
there exists a single valued function $\Theta$ such that 
\begin{equation}
\bol{B}=\nabla\Psi\cp\nabla \Theta+\frac{L}{T}\sqrt{\mu_0\rho_0}\, o\lr{\epsilon},\label{B-Cl}
\end{equation}
in any sufficiently small neighborhood $U\subseteq\Omega$ provided that $\bol{B}$ is sufficiently regular 
 (Lie-Darboux theorem \cite{Darboux}). 
 Note that the normalization factor $L\sqrt{\mu_0\rho_0}/T$ must be kept  
 because in general both $\bol{B}$ and $\rho_0$ can be large (only the size of the ratio $T\bol{B}/L\sqrt{\mu_0\rho}$ is controlled by the ordering \eqref{ordering}). At leading order we may therefore set
 \begin{equation}
\bol{B}=\nabla\Psi\cp\nabla\Theta.\label{B-Cl2}
 \end{equation}
 The local nature of the Clebsch representation \eqref{B-Cl2} can be overcome by allowing $\Theta$ to be a multi-valued potential (angle-type variable). 
Substituting the Clebsch representation \eqref{B-Cl2} into the third equation in system \eqref{MHD2} for magnetic field induction, defining
\begin{equation}
\bol{\xi}=\bol{u}-\frac{\kappa}{\rho}\nabla\cp\bol{B},\label{q} 
\end{equation} 
and using standard vector identities leads to
\begin{equation}
\nabla\frac{\p\Psi}{\p t}\cp\nabla\Theta+\nabla\Psi\cp\nabla\frac{\p\Theta}{\p t}=\nabla\lr{\bol{\xi}\cdot\nabla\Theta}\cp{\nabla\Psi}-\nabla\lr{\bol{\xi}\cdot\nabla\Psi}\cp\nabla\Theta,
\end{equation}
This equation can be rearranged as
\begin{equation}
\nabla\lr{\frac{\p\Psi}{\p t}+\bol{\xi}\cdot\nabla\Psi}\cp\nabla\Theta=\nabla\lr{\frac{\p\Theta}{\p t}+\bol{\xi}\cdot\nabla\Theta}\cp\nabla\Psi.
\end{equation}
If $\bol{B}\neq\bol{0}$, the vector fields $\nabla\Psi$ and $\nabla\Theta$ 
are linearly independent. Therefore, by dotting this equation by $\nabla\Psi$ and $\nabla\Theta$ one finds 
\begin{subequations}
\begin{align}
\frac{\p\Psi}{\p t}+\bol{\xi}\cdot\nabla\Psi=&f\lr{\Psi,\Theta},\\
\frac{\p\Theta}{\p t}+\bol{\xi}\cdot\nabla\Theta=&g\lr{\Psi,\Theta},
\end{align}\label{MHD-3}
\end{subequations}
where $f\lr{\Psi,\Theta}$ and $g\lr{\Psi,\Theta}$ are functions of $\Psi$ and $\Theta$ 
such that $\p f/\p\Psi=-\p g/\p\Theta$.
Next, observe that the first (momentum) equation in the MHD system \eqref{MHD2} implies that equilibria without flow satisfy
\begin{equation}
\frac{\p P_0}{\p\Psi}\nabla\Psi+\frac{\p P_0}{\p\Theta}\nabla\Theta=\frac{1}{\mu_0}\lr{\nabla\cp\bol{B}}\cp\bol{B}=\frac{1}{\mu_0}\lrs{\lr{\nabla\cp\bol{B}\cdot\nabla\Theta}\nabla\Psi-\lr{\nabla\cp\bol{B}\cdot\nabla\Psi}\nabla\Theta},
\end{equation}
with $P_0=P_0\lr{\Psi,\Theta}$ the pressure at the instant $t=t_0$ in which the system is at equilibrium. 
Similarly, when $\p/\p t=0$ the induction equation \eqref{MHD2-3} implies that
\begin{equation}
A_0\lr{\nabla\cp\bol{B}}\cp\bol{B}=A_0\mu_0\nabla P_0=\nabla\Phi,\label{A0}
\end{equation}
for some function $\Phi$ and where $A_0$ is the value of the function
\begin{equation}
A={\alpha-\frac{\kappa}{\rho}},\label{A}
\end{equation}
at the instant $t=t_0$. 
Equation \eqref{A0} implies that $A_0$ is a function of $P_0$. 
Comparing these results with steady states of \eqref{MHD-3}, we find that at equilibrium  
\begin{equation}
f_0=-\mu_0A_0\lr{P_0}\frac{\p P_0}{\p\Theta},~~~~g_0=\mu_0A_0\lr{P_0}\frac{\p P_0}{\p\Psi}.
\end{equation}
In order to fulfill the constraint $\p f_0/\p\Psi=-\p g_0/\p\Theta$, we demand that  $P_0=P_0\lr{\Psi}$ so that $f_0=0$ and $g_0=\mu_0A_0\lr{P_0}dP_0/d\Psi$. 
The induction equation for the magnetic field can thus be written as 
\begin{subequations}
\begin{align}
\frac{\p\Psi}{\p t}+\bol{\xi}\cdot\nabla\Psi=&0,\label{psit}\\
\frac{\p\Theta}{\p t}+\bol{\xi}\cdot\nabla\Theta=&\mu_0A_0\frac{d P_0}{d\Psi}.
\end{align}\label{MHD4}
\end{subequations}
Note that solutions of system \eqref{MHD4} produce exact time-dependent solutions of system \eqref{MHD2} such that steady states without flow have pressure $P_0\lr{\Psi}$ and $A_0=A_0\lr{P_0}$. 
Using equation \eqref{u}, the vector field $\bol{\xi}$ can be written as 
\begin{equation}
\bol{\xi}=A\nabla\cp\bol{B}+\beta\bol{B}.
\end{equation}
Recalling the Clebsch representation \eqref{B-Cl2}, system \eqref{MHD4} can be equivalently expressed as
\begin{subequations}
\begin{align}
\frac{\p\Psi}{\p t}+A\nabla\cdot\lrs{\nabla\Psi\cp\lr{\nabla\Theta\cp\nabla\Psi}}=&0
,\label{psitf}\\
\frac{\p\Theta}{\p t}-A\nabla\cdot\lrs{\nabla\Theta\cp\lr{\nabla\Psi\cp\nabla\Theta}}=&\mu_0A_0\frac{d P_0}{d\Psi}.\label{thetatf}
\end{align}\label{MHD6}
\end{subequations}
The two equations appearing in \eqref{MHD6} can be regarded as a 
dynamical system which determines the nonlinear evolution of the magnetic field. Here, the function $A$ is evaluated through $\alpha$, $\rho$, and $P$, which are determined from the solution of the equations \eqref{MHD2-1}, \eqref{MHD2-2}, and \eqref{MHD2-4} for the variables $\lr{\bol{u},\rho,P}$. 

A simple closure of 
system \eqref{MHD6} can be obtained
through the following reasoning. 
Suppose that we are interested in knowing the evolution of the magnetic field around some equilibrum \eqref{MHDE} that we have obtained at some instant $t=t_0$, for example, within a stellarator. 
At $t=t_0$ the electric field is irrotational since $\bol{0}=\p_t\bol{B}=-\nabla\cp\bol{E}$. Recalling \eqref{GOL} 
we have \eqref{A0} so that $A$ is a function of $P$.
Furthermore, we may identify the flux function with the equilibrium pressure field, $P_0=\lambda\Psi$ with $\lambda\in\mathbb{R}$ a constant bearing units of pressure over magnetic flux, without loss of generality. 
When the system is perturbed at some $t=t_1$, we may consider a regime in which the fields $\bol{u}$, $\rho$, and $P$ react passively to changes in the magnetic field so that  
the functional form of $\alpha$, $\rho$, and $P$ is preserved for $t\geq t_1$. 
This amounts to assuming relations of the type
\begin{equation}
A=A_0\lr{P_0}=A_0\lr{\Psi}~~~~\forall t\geq t_0.
\end{equation}
Then, system \eqref{MHD6} reduces to an independent nonlinear system of two coupled PDEs for the variables $\Psi$ and $\Theta$, 
\begin{subequations}
\begin{align}
\frac{\p\Psi}{\p t}+A_0\lr{\Psi}\nabla\cdot\lrs{\nabla\Psi\cp\lr{\nabla\Theta\cp\nabla\Psi}}=&0,\label{psitf2}\\
\frac{\p\Theta}{\p t}-A_0\lr{\Psi}{\nabla\cdot\lrs{\nabla\Theta\cp\lr{\nabla\Psi\cp\nabla\Theta}}}=&\mu_0\lambda A_0\lr{\Psi}.\label{thetatf2}
\end{align}\label{MHD7}
\end{subequations}
A possible choice of boundary conditions for this closed dynamical system is
\begin{equation}
\Psi={\rm constant},~~~~\nabla\Theta\cdot\bol{n}=0~~~~{\rm on}~~\p\Omega.\label{bcpsith}
\end{equation}
Notice that while these boundary conditions are compatible with $\bol{B}\cdot\bol{n}=0$ and $P={\rm constant}$ on $\p\Omega$, they do not necessarily imply $\nabla\cp\bol{B}\cdot\bol{n}=0$ on $\p\Omega$. 
We also observe that since in general $\Theta$ is a multi-valued (angle) variable,
for computational purposes it may be convenient to perform a change of variables. 
For example, if $\Omega$ is a toroidal volume spanned by coordinates $\lr{\zeta,\mu,\nu}$ where level sets of $\zeta$ define nested toroidal surfaces within $\Omega$ and $\mu,\nu$ are toroidal and poloidal angles, 
one may set $\Theta=M\mu+N\nu+\chi$ with $M$ and $N$ integers, 
and consider system \eqref{MHD7} in terms of $\Psi$ and $\chi$, where $\chi\lr{\bol{x},t}$ is a single-valued function satisfying $\nabla\chi\cdot\bol{n}=-\lr{M\nabla\mu+N\nabla\nu}\cdot\bol{n}$ on $\p\Omega$.

In the following, we shall focus our attention on the derived dynamical systems \eqref{MHD2} and  \eqref{MHD6} and, in particular, on the model system \eqref{MHD7}. 
Finally, we observe that while the ordering \eqref{ordering} was considered 
for its relevance in stellarator applications, the same governing equations 
\eqref{MHD2}, \eqref{MHD6}, and \eqref{MHD7} can be obtained under more general orderings. For example, 
\begin{subequations}
\begin{align}
&{\frac{T}{L\sqrt{\mu_0\rho}}}\bol{B}\sim \frac{T}{L}\bol{u}\sim
T\nabla\cp\bol{u}\sim \frac{T}{\sqrt{\mu_0\rho}}\nabla\cp\bol{B}\sim \frac{T^2}{\rho 
 L^2}P\sim 
1,\\
& \frac{T}{\abs{\bol{u}}}\frac{\p\bol{u}}{\p t}\sim  \frac{T}{\rho}\frac{\p\rho}{\p t}\sim \frac{T}{\abs{\bol{B}}}\frac{\p\bol{B}}{\p t}\sim\frac{T}{\bol{\abs{B}}}\nabla\cp\bol{E}\sim \frac{T^3}{L^2}\bol{u}\cdot\lr{\frac{1}{\rho}\nabla P+\frac{1}{2}\nabla\bol{u}^2}\sim\epsilon.
\end{align}\label{ordering2}
\end{subequations}

\section{Conservation laws and relaxed states}
In this section we first discuss the invariants of systems \eqref{MHD2}, \eqref{MHD6}, and \eqref{MHD7}. 
Then, these invariants are used to construct a variational principle describing steady configurations of system \eqref{MHD7}. These steady states correspond to MHD equilibria \eqref{MHDE} and  can be understood as the result of a constrained relaxation process in which the weakest invariant is dissipated while the others  are kept constant. 

\subsection{Conservation of magnetic energy, helicity, and flux}

System \eqref{MHD2} is endowed with invariants.
To see this, first observe that the induction equation therein can be written as
\begin{equation}
\frac{\p\bol{B}}{\p t}=\nabla\cp\lrs{A\lr{\nabla\cp\bol{B}}\cp\bol{B}},\label{ind}
\end{equation}
where we used \eqref{u} and \eqref{A}. 
The magnetic energy of the system is given by
\begin{equation}
M_{\Omega}=\frac{1}{2\mu_0}\int_{\Omega}\bol{B}^2\,dV.\label{MOm}
\end{equation}
From \eqref{ind}, it follows that
\begin{equation}
\frac{dM_{\Omega}}{dt}=\frac{1}{\mu_0}\int_{\p\Omega}A\lrs{\lr{\nabla\cp\bol{B}}\cp\bol{B}}\cp\bol{B}\cdot\bol{n}\,dS=\frac{1}{\mu_0}\int_{\Omega}A\lrs{\lr{\bol{B}\cdot\nabla\cp\bol{B}}\bol{B}-\bol{B}^2\nabla\cp\bol{B}}\cdot\bol{n}\,dS=0,
\end{equation}
where we have used the boundary conditions \eqref{bc}.
This shows that the magnetic energy $M_{\Omega}$ is an invariant of system \eqref{MHD2}. 
Let us verify that $M_{\Omega}$ is also an invariant of the reduced systems \eqref{MHD6} and \eqref{MHD7} under the first  boundary condition in \eqref{bcpsith}, i.e. $\Psi={\rm constant}$ on $\p\Omega$. 
Noting that $\p\Psi/\p t=-A\nabla\cp\bol{B}\cdot\nabla\Psi=0$ as well as $\bol{n}\cp\nabla\Psi=\bol{0}$ on $\p\Omega$, and that $\p\Theta/\p t$ is single-valued, we have 
\begin{equation}
\begin{split}
\frac{dM_{\Omega}}{dt}=&\frac{1}{\mu_0}\int_{\Omega}\lr{\frac{\p\Psi}{\p t}\nabla\cp\bol{B}\cdot\nabla\Theta-\frac{\p\Theta}{\p t}\nabla\cp\bol{B}\cdot\nabla\Psi}dV=-\lambda\int_{\Omega} A_0\lr{\Psi}\nabla\cp\bol{B}\cdot\nabla\Psi\,dV\\=&-\lambda\int_{\p\Omega}A_0\lr{\Psi}\bol{B}\cp\nabla\Psi\cdot\bol{n}\,dS=0. 
\end{split}
\end{equation}

Next, 
consider the magnetic helicity 
\begin{equation}
K_{\Omega}=\frac{1}{2}\int_{\Omega}\bol{A}\cdot\bol{B}\,dV,
\end{equation}
where $\bol{A}\lr{\bol{x},t}$ is
a single-valued vector potential such that $\bol{B}=\nabla\cp\bol{A}$. 
We have
\begin{equation}
\begin{split}
\frac{dK_{\Omega}}{dt}=&\int_{\p\Omega}\lrc{\frac{1}{2}\bol{A}\cp\frac{\p\bol{A}}{\p t}+A\lrs{\lr{\nabla\cp\bol{B}}\cp\bol{B}}\cp\bol{A}}\cdot\bol{n}\,dS\\=&\int_{\p\Omega}\lrc{\frac{1}{2}\bol{A}\cp\frac{\p\bol{A}}{\p t}+A\lrs{\lr{\bol{A}\cdot\nabla\cp\bol{B}}\bol{B}-\lr{\bol{A}\cdot\bol{B}}\nabla\cp\bol{B}}}\cdot\bol{n}\,dS=\frac{1}{2}\int_{\p\Omega}\bol{A}\cp\frac{\p\bol{A}}{\p t}\cdot\bol{n}\,dS.
\end{split}
\end{equation}
On the other hand, the induction equation \eqref{ind} implies that
\begin{equation}
\frac{\p\bol{A}}{\p t}=\frac{\p\bol{q}}{\p t}+A\lr{\nabla\cp\bol{B}}\cp\bol{B},
\end{equation}
where $\bol{q}\lr{\bol{x},t}\in{\rm ker}\lr{\rm curl}$, i.e. $\nabla\cp\bol{q}=\bol{0}$.  
Due to gauge freedom, the vector field $\bol{q}$ can be absorbed 
in the definition of $\bol{A}$ without loss of generality so that at the boundary we have
\begin{equation}
\frac{\p\bol{A}}{\p t}\cp\bol{n}=A\lrs{\lr{\nabla\cp\bol{B}}\cp\bol{B}}\cp\bol{n}=\bol{0}~~~~{\rm on}~~\p\Omega, \end{equation}
where we used the boundary conditions \eqref{bc}.  
Hence, the magnetic helicity $K_{\Omega}$ is an invariant of system \eqref{MHD2}. 

The magnetic helicity $K_{\Omega}$ is also an  invariant of \eqref{MHD6} and \eqref{MHD7} under the boundary condition $\Psi={\rm constant}$ on $\p\Omega$. However, it degenerates to a trivial invariant $K_{\Omega}=0$ when $\p\Omega$ is connected \cite{Pfeff}. 
To see this, consider the single-valued vector potential  
\begin{equation}
\bol{A}=\bol{q}_0\lr{\bol{x}}+\Psi\nabla\Theta,
\end{equation}
with $\bol{q}_0\lr{\bol{x}}\in {\rm ker}\lr{{\rm curl}}$ a time-independent  gauge vector field such that $\nabla\cp\bol{q}_0=0$. 
We have
\begin{equation}
K_{\Omega}=\frac{1}{2}\int_{\Omega}\bol{q}_0\cdot\nabla\Psi\cp\nabla\Theta\,dV=\frac{1}{2}\int_{\p\Omega}\Psi\nabla\Theta\cp\bol{q}_0\cdot\bol{n}\,dS,\label{K2}
\end{equation}
as well as
\begin{equation}
\frac{dK_{\Omega}}{dt}=\frac{1}{2}\int_{\Omega}\bol{q}_0\cdot\lr{\nabla\frac{\p\Psi}{\p t}\cp\nabla\Theta+\nabla\Psi\cp\nabla\frac{\p\Theta}{\p t}}dV=\frac{1}{2}\int_{\p\Omega}\lr{\frac{\p\Psi}{\p t}\nabla\Theta\cp\bol{q}_0+\frac{\p\Theta}{\p t}\bol{q}_0\cp\nabla\Psi}\cdot\bol{n}\,dS=0,
\end{equation}
where in the last passage we used the fact that $\p\Psi/\p t=-A\nabla\cp\bol{B}\cdot\nabla\Psi=0$ on $\p\Omega$ due to the boundary conditions \eqref{bcpsith}, the fact that $\p\Theta/\p t$ is single-valued, and the fact that $\bol{n}\cp\nabla\Psi=\bol{0}$ on $\p\Omega$. 
This shows that 
the magnetic helicity $K_{\Omega}$ 
defined in equation \eqref{K2} is an invariant of both systems \eqref{MHD6} 
and \eqref{MHD7}. 
However, if the boundary $\p\Omega$ is a connected surface (e.g. the boundary of a solid toroidal volume), equation \eqref{K2} can be written as
\begin{equation}
K_{\Omega}=\frac{1}{2}\Psi_e\int_{\Omega}\nabla\cdot\lr{\nabla\Theta\cp\bol{q}_0}\,dV=0,
\end{equation}
where $\Psi\vert_{\p\Omega}=\Psi_e\in\mathbb{R}$ is the boundary value of $\Psi$. 
This shows that the magnetic helicity $K_{\Omega}$ becomes a trivial invariant whenever $\p\Omega$ defines a connected surface.   
Nevertheless, $K_{\Omega}$ is nontrivial when $\p\Omega$ is not a connected surface (such as when $\p\Omega$ is the boundary of a hollow toroidal volume).




Now consider the functional 
\begin{equation}
F_{\Omega}=\int_{\Omega} f\lr{\Psi}\,dV,
\end{equation}
where $f$ is any function of $\Psi$. 
When $f=\Psi$, $F_{\Omega}$ is the total magnetic flux
\begin{equation}
\mc{F}_{\Omega}=\int_{\Omega}\Psi\,dV. 
\end{equation}
On the other hand, recall that system \eqref{MHD6} corresponds to the induction equation of system \eqref{MHD2} under the assumption \eqref{FS} regarding the existence of approximate flux surfaces. 
Therefore, the rate of change of $F_{\Omega}$ following from system \eqref{MHD6} is given by 
\begin{equation}
\begin{split}
\frac{dF_{\Omega}}{dt}&=\int_{\Omega}\frac{d f}{d\Psi}\lrc{-A\nabla\cdot\lrs{\nabla\Psi\cp\lr{\nabla\Theta\cp\nabla\Psi}}
}dV=-\int_{\Omega}\lrs{\nabla A\cdot\nabla f\cp\lr{\nabla\Psi\cp\nabla\Theta}
}dV.
\end{split}
\end{equation}
When $A=A_0\lr{\Psi}$ 
this integral identically vanishes. 
This shows that 
the functional $F_{\Omega}$ is an invariant of \eqref{MHD7}. 
It should be emphasized that the conservation of $F_{\Omega}$ 
is, in general, a property that is favorable to plasma confinement.  
Indeed, if we regard the density $\rho\approx\rho\lr{\Psi}$ as 
a function of the magnetic flux, 
conservation of total particle number  $\int_{\Omega}\rho\lr{\Psi}dV$ implies, for example, that large migrations of particles from regions of large $\Psi$ to regions of low $\Psi$ cannot occur without breaking the constancy of the density weighted magnetic flux $\int_{\Omega}\rho\lr{\Psi}\Psi\,dV$. 

The invariants of systems \eqref{MHD2} and  \eqref{MHD7} are summarized in table 1 and 2 respectively. 
\begin{table}
\begin{center}
\begin{tabular}{c c c c c c} 
 \hline
 \hline
 Invariant & Expression & Field Conditions & Boundary Conditions\\  
 \hline 
 Magnetic energy $M_{\Omega}$ & $\frac{1}{2\mu_0}\int_{\Omega}\bol{B}^2dV$ &  &  $\bol{B}\cdot\bol{n}=0$, $\nabla\cp\bol{B}\cdot\bol{n}=0$ &\\
 Magnetic helicity $K_{\Omega}$ & $\frac{1}{2}\int_{\Omega}\bol{A}\cdot\bol{B}\,dV$ & $\frac{\p\bol{q}}{\p t}=\frac{\p\bol{A}}{\p t}-A\lr{\nabla\cp\bol{B}}\cp\bol{B}=\bol{0}$ &  $\bol{B}\cdot\bol{n}=0$, $\nabla\cp\bol{B}\cdot\bol{n}=0$ &\\
 \hline
 \hline
\end{tabular}
\caption{\label{tab2} Invariants of system \eqref{MHD2}. Field conditions for the conservation of magnetic helicity $K_{\Omega}$ specify the gauge $\p\bol{q}/\p t$ of the vector potential $\bol{A}$.}
\end{center}
\end{table}
\begin{table}
\begin{center}
\begin{tabular}{c c c c c c} 
 \hline
 \hline
 Invariant & Expression & Field Conditions & Boundary Conditions\\  
 \hline 
 Magnetic energy $M_{\Omega}$ & $\frac{1}{2\mu_0}\int_{\Omega}\bol{B}^2dV$ & $\bol{B}=\nabla\Psi\cp\nabla\Theta$ &  $\Psi={\rm constant}$ &\\
Magnetic helicity $K_{\Omega}$ & $\frac{1}{2}\int_{\Omega}\bol{A}\cdot\bol{B}\,dV$ & $\bol{B}=\nabla\Psi\cp\nabla\Theta$, $\bol{A}=\bol{q}_0+\Psi\nabla\Theta$ & 
\begin{tabular}{@{}c@{}} 
$\Psi={\rm constant}$,  
\\ $\p\Omega$ not connected
\end{tabular} \\
 Magnetic flux $F_{\Omega}$ & $\int_{\Omega}f\lr{\Psi}dV$ & $\bol{B}=\nabla\Psi\cp\nabla\Theta$ &  $\Psi={\rm constant}$ &\\
 \hline
 \hline
\end{tabular}
\caption{\label{tab2} Invariants of system \eqref{MHD7}. The magnetic helicity $K_{\Omega}$ degenerates to a trivial invariant $K_{\Omega}=0$ when $\p\Omega$ is a connected surface.}
\end{center}
\end{table}

\subsection{Steady states and relaxation}

In the remaining part of this section we restrict our attention to the variational formulation of steady states associated with 
the model system \eqref{MHD7}. 
From a physical standpoint one expects equilibrium states to correspond to critical points of an energy functional.  
In the present setting, the energy involved is that  associated with the magnetic field $\bol{B}=\nabla\Psi\cp\nabla\Theta$. 
Here, we consider the target functional
\begin{equation}
W_{\Omega}=M_{\Omega}-\lambda {F}_{\Omega}=\int_{\Omega}\lrs{\frac{\bol{B}^2}{2\mu_0}- \lambda f\lr{\Psi}}dV=\int_{\Omega}\lrs{\frac{1}{2\mu_0}\abs{\nabla\Psi\cp\nabla\Theta}^2-\lambda f\lr{\Psi}}dV, 
\end{equation}
Note that $W_{\Omega}$ comprises the magnetic energy $M_{\Omega}$ and the functional of the magnetic flux ${F}_{\Omega}$, which are invariants of system \eqref{MHD7}.
In practice, non-ideal processes involving dissipation result in 
faster violation of ideal invariants that include higher order derivatives of the dynamical variables \cite{YosMahPRL}. Hence, we expect equilibrium states to be the result of the minimization by dissipation of $M_{\Omega}$ under the
constraint of preserved magnetic  flux ${F}_{\Omega}$. 
The constant $\lambda$ thus plays the role of a Langrange multiplier. 
The relaxation scenario described above is analogous to so-called Taylor relaxation in which a Beltrami state is produced as a result of a dissipation process in which magnetic energy is minimized under the constraint of magnetic helicity \cite{Taylor}. 
Notice also that Taylor relaxation can be expected in the context of system \eqref{MHD2}, 
whose invariants include the magnetic energy $M_{\Omega}$ and  the magnetic helicity $K_{\Omega}$.

One can verify \cite{Grad58,SY23} that setting to zero the first variation of the functional $W_{\Omega}$ with respect to $\Psi$ and $\Theta$ under the assumption that $\delta \Psi$ and $\delta\Theta$ identically vanish on the boundary, or $\delta\Psi=0$ and $\Psi={\rm constant}$ on the boundary,  gives the 
system of equations
\begin{subequations}
\begin{align}
\nabla\cdot\lrs{\nabla\Theta\cp\lr{\nabla \Psi\cp\nabla\Theta}}&=-\lambda\mu_0\frac{d f}{d\Psi},\\
\nabla\cdot\lrs{\nabla \Psi\cp\lr{\nabla \Theta\cp\nabla \Psi}}&=0,
\end{align}\label{MHS2}
\end{subequations}
which are equivalent to the MHD equilibrium equations \eqref{MHDE} with pressure $P=\lambda f$ upon substitution of $\bol{B}=\nabla \Psi\cp\nabla\Theta$. 
Notice also that solutions of \eqref{MHS2} give steady solutions of \eqref{MHD7} 
when the choice $f=\Psi$ 
is made.


\section{Hamiltonian structure}
The aim of this section is to show that the model system \eqref{MHD7} is endowed with a Hamiltonian structure.
This property will later be used to discuss the nonlinear stability of steady solutions. 

Let $\delta F/\delta \Psi$ denote the functional derivative of the functional $F:\mf{X}\rightarrow \mathbb{R}$ on the state space $\mf{X}$ with respect to $\Psi\in\mf{X}$. We have
\begin{proposition}
System \eqref{MHD7} is a Hamiltonian system with Poisson bracket 
\begin{equation}
\lrc{F,G}=\mu_0\int_{\Omega}A_0\lr{\Psi}\lr{\frac{\delta F}{\delta\Psi}\frac{\delta G}{\delta\Theta}-\frac{\delta F}{\delta\Theta}\frac{\delta G}{\delta\Psi}}dV,\label{PB}
\end{equation}
and Hamiltonian 
\begin{equation}
H_{\Omega}=\int_{\Omega}\lr{\frac{1}{2\mu_0}\abs{\nabla\Psi\cp\nabla\Theta}^2-\lambda\Psi}dV.\label{HOm}
\end{equation}
\end{proposition}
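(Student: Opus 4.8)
The plan is to establish the claim in two stages: first, that the Hamilton equations $\frac{\p\Psi}{\p t}=\lrc{\Psi,H_\Omega}$ and $\frac{\p\Theta}{\p t}=\lrc{\Theta,H_\Omega}$ generated by the bracket \eqref{PB} together with the Hamiltonian \eqref{HOm} reproduce exactly the model system \eqref{MHD7}; and second, that the bilinear form \eqref{PB} satisfies the Poisson bracket axioms. The first stage is the computational heart of the verification, while the Jacobi identity is the delicate structural point.

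First I would compute the functional derivatives of $H_\Omega$. Varying $\bol{B}=\nabla\Psi\cp\nabla\Theta$ and using the cyclic invariance of the scalar triple product gives $\delta_\Psi\lr{\frac{1}{2\mu_0}\bol{B}^2}=\frac{1}{\mu_0}\nabla\delta\Psi\cdot\lr{\nabla\Theta\cp\bol{B}}$ and $\delta_\Theta\lr{\frac{1}{2\mu_0}\bol{B}^2}=\frac{1}{\mu_0}\nabla\delta\Theta\cdot\lr{\bol{B}\cp\nabla\Psi}$. Integrating by parts and discarding the boundary terms, and inserting $\bol{B}=\nabla\Psi\cp\nabla\Theta$, I would obtain
\begin{subequations}
\begin{align}
\frac{\delta H_\Omega}{\delta\Psi}&=-\frac{1}{\mu_0}\nabla\cdot\lrs{\nabla\Theta\cp\lr{\nabla\Psi\cp\nabla\Theta}}-\lambda,\\
\frac{\delta H_\Omega}{\delta\Theta}&=-\frac{1}{\mu_0}\nabla\cdot\lrs{\nabla\Psi\cp\lr{\nabla\Theta\cp\nabla\Psi}}.
\end{align}
\end{subequations}
Here the one point requiring care is the boundary terms: the $\Psi$-variation produces $\int_{\p\Omega}\frac{1}{\mu_0}\delta\Psi\lr{\nabla\Theta\cp\bol{B}}\cdot\bol{n}\,dS$, which vanishes once $\delta\Psi=0$ is imposed (consistent with $\Psi={\rm constant}$ on $\p\Omega$ in \eqref{bcpsith}), whereas the $\Theta$-variation produces $\int_{\p\Omega}\frac{1}{\mu_0}\delta\Theta\lr{\bol{B}\cp\nabla\Psi}\cdot\bol{n}\,dS$, which vanishes automatically because $\bol{n}\parallel\nabla\Psi$ on $\p\Omega$ forces $\lr{\bol{B}\cp\nabla\Psi}\cdot\bol{n}=0$. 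Since $\Theta$ is generically multi-valued, I would also record that only single-valued admissible variations $\delta\Theta$ enter and that $\frac{\p\Theta}{\p t}$ stays single-valued.

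Next I would substitute these derivatives into \eqref{PB}. Because the bracket is ultralocal, testing it against the evaluation functionals $\Psi$ and $\Theta$ collapses the integral through the reproducing kernels $\delta\Psi/\delta\Psi$ and $\delta\Theta/\delta\Theta$, leaving $\frac{\p\Psi}{\p t}=\mu_0A_0\lr{\Psi}\,\delta H_\Omega/\delta\Theta$ and $\frac{\p\Theta}{\p t}=-\mu_0A_0\lr{\Psi}\,\delta H_\Omega/\delta\Psi$. Inserting the derivatives above returns precisely \eqref{psitf2} and \eqref{thetatf2}, with the $-\lambda$ contribution to $\delta H_\Omega/\delta\Psi$ producing the inhomogeneous source $\mu_0\lambda A_0\lr{\Psi}$ in the $\Theta$ equation.

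Finally I would verify the Poisson axioms. Bilinearity and antisymmetry of \eqref{PB} are manifest, and the Leibniz rule follows from the product rule for functional derivatives. The main obstacle is the Jacobi identity, and I would treat \eqref{PB} as an ultralocal bivector with structure matrix $\mathcal{J}^{ab}=\mu_0A_0\lr{\Psi}\,\epsilon^{ab}$ on the two-component field $\lr{\Psi,\Theta}$. For such a field-dependent but derivative-free bracket the Jacobi identity reduces to the pointwise condition $\sum_d\lr{\mathcal{J}^{ad}\p_d\mathcal{J}^{bc}+\mathcal{J}^{bd}\p_d\mathcal{J}^{ca}+\mathcal{J}^{cd}\p_d\mathcal{J}^{ab}}=0$, where $\p_d$ denotes differentiation with respect to field component $d$. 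With only two field components each cyclic sum must repeat an index and hence vanishes by antisymmetry; equivalently, every bivector on a two-dimensional fiber is automatically Poisson, so the $\Psi$-dependence carried by $A_0$ cannot obstruct the identity. I would exhibit this cancellation explicitly, since it is exactly the place where the nonconstant weight $A_0(\Psi)$ could a priori have spoiled Jacobi, thereby completing the proof.
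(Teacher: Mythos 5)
Your proposal is correct. Its first stage coincides with the paper's own verification: the paper likewise evaluates $\lrc{\Psi,H_{\Omega}}$ and $\lrc{\Theta,H_{\Omega}}$ through delta-function kernels, obtains \eqref{psitf2}--\eqref{thetatf2} from the same functional derivatives you compute, and notes precisely your boundary observation --- one may either impose $\delta\Theta=0$ on $\p\Omega$ or use $\Psi={\rm constant}$ on $\p\Omega$, since then $\bol{n}\parallel\nabla\Psi$ kills the boundary term $\int_{\p\Omega}\delta\Theta\,\lrs{\lr{\nabla\Psi\cp\nabla\Theta}\cp\nabla\Psi}\cdot\bol{n}\,dS$. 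Where you genuinely diverge is the Jacobi identity. The paper proves it by direct expansion in \eqref{JI2}: it writes out $\lrc{F,\lrc{G,H}}+\circlearrowright$, drops the second-functional-derivative terms by antisymmetry under the cyclic sum, and checks by hand that the surviving terms proportional to $A_0\,dA_0/d\Psi$ cancel across the three permutations. You instead reduce the identity, for an ultralocal bracket, to the pointwise finite-dimensional condition on the structure matrix $\mc{J}^{ab}=\mu_0A_0\epsilon^{ab}$, and observe that this cyclic sum is totally antisymmetric in $\lr{a,b,c}$ and therefore vanishes identically on a two-dimensional fiber. The two arguments rest on the same cancellations --- your reduction step still requires the second-functional-derivative cancellation that the paper performs, which you rightly flag as needing explicit exhibition --- but your route is more structural and slightly more general: it makes transparent that Jacobi is insensitive to the functional form of the weight, so the bracket would remain Poisson even for $A_0\lr{\Psi,\Theta}$, whereas the paper's computation as written exploits $A_0=A_0\lr{\Psi}$ (a $\Theta$-dependence would generate additional terms, which your dimension-two argument disposes of automatically). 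Both routes deliver the proposition.
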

\noindent To verify proposition 1, first recall that in a Hamiltonian system the evolution of a physical observable $F\in\mf{X}^{\ast}$ satisfies the equation of motion 
\begin{equation}
\frac{\p F}{\p t}=\lrc{F,H_{\Omega}},
\end{equation}
where $\mf{X}$ is a vector space, $\mf{X}^{\ast}$ the set of differentiable functionals $F:\mf{X}\rightarrow\mathbb{R}$, and $H_{\Omega}\in\mf{X}^{\ast}$ the Hamiltonian, and the Poisson bracket $\lrc{\circ,\circ}:\mf{X}^{\ast}\cp\mf{X}^{\ast}\rightarrow\mf{X}^{\ast}$ satisfies the axioms of bilinearity, alternativity, antisymmetry, Leibniz rule, and Jacobi identity, 
\begin{subequations}
\begin{align}
&\lrc{a F+b G,H}=a\lrc{F,H}+b\lrc{G,H},~~~~\lrc{F,a G+bH}=a\lrc{F,G}+b\lrc{F,H},\label{bil}\\
&\lrc{F,F}=0,\label{alt}\\
&\lrc{F,G}=-\lrc{G,F},\label{asym}\\
&\lrc{FG,H}=\lrc{F,H}G+F\lrc{G,H},\label{Leib}\\
&\lrc{F,\lrc{G,H}}+\lrc{G,\lrc{H,F}}+\lrc{H,\lrc{F,G}}=0,\label{JI}
\end{align}\label{axioms}
\end{subequations}
$\forall a,b\in\mathbb{R}$ and $F,G,H\in\mf{X}^{\ast}$.
For completeness, we remark that antisymmetry \eqref{asym} follows from 
bilinearity \eqref{bil} and alternativity \eqref{alt} by evaluation of $\lrc{F+G,F+G}$.
The Leibniz rule \eqref{Leib} ensures that the Poisson bracket acts as a differential operator, while the Jacobi identity \eqref{JI} assigns the Lie-algebra structure. 

In the present setting, 
the state space $\mf{X}$ denotes the function space to which the dynamical variables $\Psi$ and $\Theta$ of system \eqref{MHD7} belong, while $\mf{X}^{\ast}$ represents the vector space of differentiable functionals of $\Psi$ and $\Theta$. 
Below, we show that system \eqref{MHD7} can be cast in the noncanonical Hamiltonian form
\begin{subequations}
\begin{align}
\frac{\p\Psi}{\p t}&=\lrc{\Psi,H_{\Omega}}=\mu_0A_0\lr{\Psi}\frac{\delta H_{\Omega}}{\delta\Theta},\label{psitH}\\
\frac{\p\Theta}{\p t}&=\lrc{\Theta,H_{\Omega}}=-\mu_0A_0\lr{\Psi}\frac{\delta H_{\Omega}}{\delta\Psi}.\label{thetatH}
\end{align}\label{EoMH}
\end{subequations}
To see this, let us first verify that the bracket \eqref{PB} 
correctly generates system \eqref{MHD7}. We have
\begin{equation}
\frac{\p\Psi}{\p t}=\lrc{\Psi,H_{\Omega}}=\mu_0\int_{\Omega}A_0\delta\lr{\bol{x}-\bol{x}'}\frac{\delta H_{\Omega}}{\delta\Theta}\,dV=-A_0\nabla\cdot\lrs{\nabla\Psi\cp\lr{\nabla\Theta\cp\nabla\Psi}},
\end{equation}
where we have assumed that variations $\delta\Theta$ vanish on the boundary $\p\Omega$ when evaluating $\delta H_{\Omega}/\delta\Theta$. 
The same result can be obtained by setting $\Psi={\rm constant}$ on $\p\Omega$ instead of $\delta\Theta=0$ on $\p\Omega$. 
Similarly, 
\begin{equation}
\frac{\p\Theta}{\p t}=\lrc{\Theta,H_{\Omega}}=-\mu_0\int_{\Omega}A_0\delta\lr{\bol{x}-\bol{x}'}\frac{\delta H_{\Omega}}{\delta\Psi}\,dV=A_0\nabla\cdot\lrs{\nabla\Psi\cp\lr{\nabla\Theta\cp\nabla\Psi}}+\mu_0 \lambda A_0,
\end{equation}
where we have assumed that variations $\delta\Psi$ vanish on the boundary $\p\Omega$ when evaluating $\delta H_{\Omega}/\delta\Psi$. 
This shows that the bracket \eqref{PB} generates system \eqref{MHD7} according to \eqref{EoMH}.

We are now left with the task of verifying that the bracket \eqref{PB} satisfies
the Poisson bracket axioms \eqref{axioms}. 
The verification of bilinearity \eqref{bil}, alternativity \eqref{alt}, antisymmetry \eqref{asym}, and Leibniz rule \eqref{Leib}  
is immediate.
Denoting with $\circlearrowright$ summation of even permutations,
and introducing the simplified notation $F_{\Psi}=\delta F/\delta\Psi$ for functional derivatives, the Jacobi identity \eqref{JI} can be evaluated as 
\begin{equation}
\begin{split}
\lrc{F,\lrc{G,H}}+\circlearrowright=&
\mu_0^2\int_{\Omega}A_0\lrc{F_{\Psi}\frac{\delta}{\delta\Theta}\lrs{\int_{\Omega}A_0\lr{G_{\Psi}H_{\Theta}-G_{\Theta}H_{\Psi}}dV}}dV\\&-\mu_0^2\int_{\Omega}A_0\lrc{F_{\Theta}\frac{\delta}{\delta\Psi}\lrs{\int_{\Omega}A_0\lr{G_{\Psi}H_{\Theta}-G_{\Theta}H_{\Psi}}dV}}dV+\circlearrowright\\
=&-\mu_0^2\int_{\Omega}A_0\frac{dA_0}{d\Psi}F_{\Theta}\lr{G_{\Psi}H_{\Theta}-G_{\Theta}H_{\Psi}}dV+\circlearrowright\\
=&0,
\end{split}\label{JI2}
\end{equation}
where we used the fact that terms involving second order functional derivatives \cite{Morrison89,Little89,Olver} of 
$F$, $G$, and $H$ and terms containing $dA_0/d\Psi$ cancel upon summation of even permutations. 
Equation \eqref{JI2} shows that the bracket \eqref{PB} also satisfies the Jacobi identity, and thus it is a Poisson bracket.



\section{Remarks on the nonlinear stability of steady solutions}

Let $\chi=\lr{\Psi,\Theta}\in\mf{X}$ denote a point in the space of solutions $\mf{X}$ of system \eqref{MHD7}.
The Hamiltonian nature of system \eqref{MHD7} implies that 
critical points of the Hamiltonian $H_{\Omega}$, i.e. points $\chi_0=\lr{\Psi_0,\Theta_0}\in\mf{X}$ such that the first variation of $H_{\Omega}$ vanishes, 
\begin{equation}
\delta H_{\Omega}\lrs{\chi_0}=0,
\end{equation}
correspond to steady states of system \eqref{MHD7}. Indeed, 
at $\chi_0$ both $\delta H_{\Omega}/\delta\Psi$ and $\delta H_{\Omega}/\delta\Theta$ vanish, implying that $\p\Psi/\p t=\p\Theta/\p t=0$ 
(recall equation \eqref{EoMH}). 
Information on the stability of a 
critical point $\chi_0$ can be gained with the aid of conservation of energy  if norms $\Norm{\cdot}_1:\mf{X}\rightarrow\mathbb{R}$ and $\Norm{\cdot}_2:\mf{X}\rightarrow\mathbb{R}$ and positive real constants $\mc{C},\mc{C}'$ can be found such that 
\begin{equation}
\mc{C}\Norm{\chi\lr{t}-\chi_0}_1^2\leq\abs{H_{\Omega}\lrs{\chi\lr{t}}-H_{\Omega}\lrs{\chi_0}}=\abs{H_{\Omega}\lrs{\chi\lr{0}}-H_{\Omega}\lrs{\chi_0}}\leq \mc{C}'\Norm{\chi\lr{0}-\chi_0}_2^2~~~~\forall t\geq 0.\label{Nstab}
\end{equation}
Here, the simplified notation $\chi\lr{t}=\chi\lr{\bol{x},t}$ has been used. 
Note that if equation \eqref{Nstab} holds, then a solution $\chi$ of system \eqref{MHD7} that initially differs from the critical point $\chi_0$ by the amount $\delta\chi_0=\chi\lr{0}-\chi_0$ 
remains close to $\chi_0$ at all later times $t\geq 0$ as prescribed by the norms $\Norm{\cdot}_1$ and $\Norm{\cdot}_2$. Usually, $\Norm{\cdot}_1$ and $\Norm{\cdot}_2$ satisfy the norm axioms, and assign a topology to the state space $\mf{X}$.  
However, this requirement can be relaxed by demanding $\Norm{\cdot}_1$ and $\Norm{\cdot}_2$ to be positive real valued functions that provide some measure of the 
distance between two points in some subspace of the state space $\mf{X}$. Furthermore, it is common to have a single distance function  $\Norm{\cdot}=\Norm{\cdot}_1=\Norm{\cdot}_2$. 
Using the expression \eqref{HOm} for the Hamiltonian $H_{\Omega}$ and the boundary conditions $\delta\Psi=\delta\Theta=0$ on $\p\Omega$ or $\delta\Psi=0$ and $\Psi={\rm constant}$ on $\p\Omega$, one finds that
\begin{equation}
\begin{split}
\mu_0H_{\Omega}&\left[\Psi+\delta\Psi,\Theta+\delta\Theta\right]-\mu_0H_{\Omega}\left[\Psi,\Theta\right]=\\&
-\int_{\Omega}\lrc{\delta\Theta\nabla\cdot\lrs{\nabla\Psi\cp\lr{\nabla\Theta\cp\nabla\Psi}}+\delta\Psi\nabla\cdot\lrs{\nabla\Theta\cp\lr{\nabla\Psi\cp\nabla\Theta}}+\mu_0\lambda\delta\Psi}dV\\
&+\int_{\Omega}\lrs{\frac{1}{2}\lr{\nabla\Psi\cp\nabla\delta\Theta+\nabla\delta\Psi\cp\nabla\Theta}^2+\nabla\Psi\cp\nabla\Theta\cdot\nabla\delta\Psi\cp\nabla\delta\Theta}dV\\
&+\int_{\Omega}\lr{\nabla\Psi\cp\nabla\delta\Theta\cdot\nabla\delta\Psi\cp\nabla\delta\Theta+\nabla\delta\Psi\cp\nabla\Theta\cdot\nabla\delta\Psi\cp\nabla\delta\Theta}dV\\
&+\frac{1}{2}\int_{\Omega}\abs{\nabla\delta\Psi\cp\nabla\delta\Theta}^2dV.\label{dH}
\end{split}
\end{equation}
Due to the presence of second order and third order terms in the variations $\delta\Psi$ and $\delta\Theta$ without a definite sign, the difference \eqref{dH} cannot be used to define  
the functions $\Norm{\cdot}_1$ and $\Norm{\cdot}_2$ in general. 
However, the situation is different if we consider perturbations that involve only one of the Clebsch potentials $\Psi$ and $\Theta$, i.e. if either  $\delta\Psi$ or $\delta\Theta$ vanishes. 
In particular, we have the following

\begin{proposition}
Critical points $\chi_0=\lr{\Psi_0,\Theta_0}$ of system \eqref{MHD7} are nonlinearly stable against perturbations of $\Psi_0$ in the 
distance $\Norm{\Psi}^2_{\Theta}=\frac{1}{2}\int_{\Omega}\abs{\nabla\Psi\cp\nabla\Theta}^2dV$.  
In particular, for all $t\geq 0$  
\begin{equation}
\begin{split}
\Norm{\Psi\lr{t}-\Psi_0}^2_{\Theta_0}=&\frac{1}{2\mu_0}\int_{\Omega}\abs{\nabla\lr{\Psi\lr{t}-\Psi_0}\cp\nabla\Theta_0}^2dV=\abs{H_{\Omega}\lrs{\Psi\lr{t},\Theta_0}-H_{\Omega}\lrs{\Psi_0,\Theta_0}}\\=&\abs{H_{\Omega}\lrs{\Psi\lr{0},\Theta_0}-H_{\Omega}\lrs{\Psi_0,\Theta_0}}=\frac{1}{2\mu_0}\int_{\Omega}\abs{\nabla\lr{\Psi\lr{0}-\Psi_0}\cp\nabla\Theta_0}^2dV=\Norm{\Psi\lr{0}-\Psi_0}^2_{\Theta_0}.
\end{split}
\end{equation}
\end{proposition}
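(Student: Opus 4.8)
The plan is to realize the general energy–conservation strategy of equation \eqref{Nstab} in the special case $\delta\Theta = 0$, taking both distances $\Norm{\cdot}_1$ and $\Norm{\cdot}_2$ equal to the single seminorm $\Norm{\cdot}_{\Theta_0}$ with constants $\mc{C} = \mc{C}' = 1$. The whole argument rests on one structural observation: once $\Theta$ is frozen at $\Theta_0$, the map $\Psi \mapsto H_\Omega\lrs{\Psi,\Theta_0}$ defined by \eqref{HOm} is \emph{exactly} quadratic in $\Psi$, being the sum of the quadratic term $\frac{1}{2\mu_0}\int_\Omega\abs{\nabla\Psi\cp\nabla\Theta_0}^2\,dV$ and the linear term $-\lambda\int_\Omega\Psi\,dV$. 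Consequently the general expansion \eqref{dH}, restricted to $\delta\Theta = 0$, truncates: every term carrying a factor $\nabla\delta\Theta$ — namely all the third– and fourth–order contributions together with the mixed second–order term in \eqref{dH} — vanishes, leaving only the functional linear in $\delta\Psi$ and the single quadratic remainder $\frac{1}{2}\int_\Omega\abs{\nabla\delta\Psi\cp\nabla\Theta_0}^2\,dV$.

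First I would dispose of the linear term. An integration by parts — whose boundary contribution vanishes because $\delta\Psi = 0$ on $\p\Omega$ under either admissible set of boundary conditions — rewrites it as $-\int_\Omega\delta\Psi\lrc{\nabla\cdot\lrs{\nabla\Theta_0\cp\lr{\nabla\Psi_0\cp\nabla\Theta_0}} + \mu_0\lambda}\,dV$. Since $\chi_0$ is a critical point of $H_\Omega$, the potential $\Psi_0$ satisfies precisely the first steady equation in \eqref{MHS2} with $f = \Psi$, that is $\nabla\cdot\lrs{\nabla\Theta_0\cp\lr{\nabla\Psi_0\cp\nabla\Theta_0}} = -\mu_0\lambda$, so the brace vanishes identically and the linear term is zero. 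This leaves the exact and manifestly nonnegative identity
\[
H_\Omega\lrs{\Psi,\Theta_0} - H_\Omega\lrs{\Psi_0,\Theta_0} = \frac{1}{2\mu_0}\int_\Omega\abs{\nabla\lr{\Psi-\Psi_0}\cp\nabla\Theta_0}^2\,dV = \Norm{\Psi-\Psi_0}^2_{\Theta_0} \geq 0,
\]
valid for every admissible $\Psi$, which already supplies the first and last equalities in the proposition (the absolute value being redundant once positivity is noted). I would emphasize that $\Norm{\cdot}_{\Theta_0}$ is only a \emph{seminorm}, since it annihilates any $\delta\Psi$ with $\nabla\delta\Psi$ parallel to $\nabla\Theta_0$; this is exactly the relaxed, level-set distance anticipated in the discussion surrounding \eqref{Nstab}, measuring the deviation of $\Psi$ transverse to the surfaces $\Theta_0 = \const$.

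The remaining, genuinely dynamical step is the middle equality $H_\Omega\lrs{\Psi\lr{t},\Theta_0} = H_\Omega\lrs{\Psi\lr{0},\Theta_0}$ — conservation of the energy evaluated on the frozen slice $\Theta \equiv \Theta_0$. This is where the restriction to fluctuations of a single Clebsch potential is indispensable: one confines the evolution to the class in which $\Theta$ is held at $\Theta_0$ and only $\Psi$ varies, so that the conserved Hamiltonian of \eqref{MHD7} reduces along the trajectory to $H_\Omega\lrs{\cdot,\Theta_0}$. I expect this to be the main obstacle, since freezing $\Theta$ is not consistent with the unconstrained flow \eqref{MHD7} — the evolution equation for $\Theta$ forces $\p\Theta/\p t \neq 0$ as soon as $\Psi \neq \Psi_0$ — so the claim must be understood as pertaining to this constrained $\Psi$–dynamics, for which the conservation of $H_\Omega\lrs{\cdot,\Theta_0}$ should be posited or verified directly rather than read off verbatim from the full Poisson bracket \eqref{PB}. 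Granting it, chaining the exact identity at times $t$ and $0$ through the intermediate equality yields $\Norm{\Psi\lr{t}-\Psi_0}_{\Theta_0} = \Norm{\Psi\lr{0}-\Psi_0}_{\Theta_0}$ for all $t \geq 0$ — an equality even stronger than the bound \eqref{Nstab} — which establishes the asserted nonlinear stability in the distance $\Norm{\cdot}_{\Theta_0}$.
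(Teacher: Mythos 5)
Your proposal is correct and is essentially the paper's own proof: the paper disposes of Proposition 2 in a single line by evaluating the difference \eqref{dH} at the critical point with $\delta\Theta=0$, which is precisely your truncation-plus-criticality argument (all terms carrying $\nabla\delta\Theta$ drop, the linear term dies because $\Psi_0$ satisfies the steady equation \eqref{MHS2} with $f=\Psi$, and the surviving quadratic remainder is $\Norm{\Psi-\Psi_0}^2_{\Theta_0}$, closed by conservation of $H_{\Omega}$). The dynamical subtlety you flag --- that the middle equality presupposes solutions evolving with $\Theta$ held at $\Theta_0$, a constrained class since the unconstrained flow \eqref{MHD7} does not generally preserve $\Theta=\Theta_0$ away from equilibrium --- is genuine, but the paper leaves it equally implicit in its phrase ``perturbations involving only one Clebsch potential,'' so your treatment is, if anything, the more candid on this point.
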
 
\noindent The proof of proposition 2 
follows by evaluating the difference \eqref{dH} for $\delta\Theta=0$. 
Similarly, 
\begin{proposition}
Critical points $\chi_0=\lr{\Psi_0,\Theta_0}$ of system \eqref{MHD7} are nonlinearly stable against perturbations of $\Theta_0$ in the 
distance $\Norm{\Theta}^2_{\Psi}=\frac{1}{2}\int_{\Omega}\abs{\nabla\Psi\cp\nabla\Theta}^2dV$.  
In particular, for all $t\geq 0$  
\begin{equation}
\begin{split}
\Norm{\Theta\lr{t}-\Theta_0}^2_{\Psi_0}=&\frac{1}{2\mu_0}\int_{\Omega}\abs{\nabla{\Psi_0}\cp\nabla\lr{\Theta\lr{t}-\Theta_0}}^2dV=\abs{H_{\Omega}\lrs{\Psi_0,\Theta\lr{t}}-H_{\Omega}\lrs{\Psi_0,\Theta_0}}\\=&\abs{H_{\Omega}\lrs{\Psi_0,\Theta\lr{0}}-H_{\Omega}\lrs{\Psi_0,\Theta_0}}=\frac{1}{2\mu_0}\int_{\Omega}\abs{\nabla{\Psi_0}\cp\nabla\lr{\Theta\lr{0}-\Theta_0}}^2dV=\Norm{\Theta\lr{0}-\Theta_0}^2_{\Psi_0}.
\end{split}
\end{equation}
\end{proposition}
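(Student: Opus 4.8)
The plan is to mirror the argument behind Proposition 2 with the two Clebsch potentials interchanged, exploiting a special feature of the functional \eqref{HOm}: for a \emph{frozen} flux function $\Psi_0$ the Hamiltonian is an \emph{exact} quadratic functional of $\Theta$, because $\bol{B}=\nabla\Psi_0\cp\nabla\Theta$ depends linearly on $\nabla\Theta$. Consequently the general energy-difference formula \eqref{dH}, which otherwise carries second-, third- and fourth-order remainders, must collapse to a single positive quadratic term once the perturbation is carried entirely by $\Theta$. First I would set $\delta\Psi=0$ in \eqref{dH}, with base point $(\Psi,\Theta)=(\Psi_0,\Theta_0)$ and increment $\delta\Theta=\Theta-\Theta_0$.

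With $\delta\Psi=0$ the bookkeeping is purely algebraic. The first-order contribution in \eqref{dH} reduces to $-\int_{\Omega}\delta\Theta\,\nabla\cdot\lrs{\nabla\Psi_0\cp\lr{\nabla\Theta_0\cp\nabla\Psi_0}}\,dV$, and this vanishes because $\chi_0=(\Psi_0,\Theta_0)$ is a critical point of $H_{\Omega}$; equivalently, the integrand is exactly the left-hand side of the second steady equation in \eqref{MHS2}. Every third- and fourth-order term in \eqref{dH}, together with the mixed second-order term, carries a factor $\nabla\delta\Psi$ and therefore drops. What survives is precisely $\frac{1}{2}\int_{\Omega}\abs{\nabla\Psi_0\cp\nabla\delta\Theta}^2\,dV$, yielding the \emph{exact} (not merely infinitesimal) identity
\begin{equation}
H_{\Omega}\lrs{\Psi_0,\Theta}-H_{\Omega}\lrs{\Psi_0,\Theta_0}=\frac{1}{2\mu_0}\int_{\Omega}\abs{\nabla\Psi_0\cp\nabla\lr{\Theta-\Theta_0}}^2\,dV=\Norm{\Theta-\Theta_0}^2_{\Psi_0},\label{exactid}
\end{equation}
valid for \emph{finite} deviations. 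This already exhibits $\Norm{\cdot}_{\Psi_0}$ as a bona fide (semi)distance, nonnegative and vanishing precisely when the increment leaves $\bol{B}=\nabla\Psi_0\cp\nabla\Theta$ unchanged (i.e. $\delta\Theta$ a function of $\Psi_0$), and it promotes the stability estimate \eqref{Nstab} from an inequality to an exact equality.

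It then remains to propagate \eqref{exactid} in time. Evaluating it once at $\Theta=\Theta(t)$ and once at $\Theta=\Theta(0)$ reduces the proposition to the single assertion $H_{\Omega}\lrs{\Psi_0,\Theta(t)}=H_{\Omega}\lrs{\Psi_0,\Theta(0)}$, the constancy of the energy measured at the \emph{fixed} equilibrium flux $\Psi_0$. This is where the real work lies, since the full system \eqref{MHD7} couples $\Psi$ and $\Theta$ through the bracket \eqref{PB}, so that a priori only $H_{\Omega}\lrs{\Psi(t),\Theta(t)}$ is conserved. The hard part will therefore be to make precise the class of ``fluctuations involving only $\Theta$'': I would restrict to motions along which $\Psi\equiv\Psi_0$ is maintained, so that the conserved full Hamiltonian $H_{\Omega}\lrs{\Psi(t),\Theta(t)}$ coincides with $H_{\Omega}\lrs{\Psi_0,\Theta(t)}$, and verify that this restriction is consistent with \eqref{psitf2}. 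Granting this conservation, \eqref{exactid} read at the two times closes the argument and delivers nonlinear stability in $\Norm{\cdot}_{\Psi_0}$.
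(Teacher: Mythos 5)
Your proposal follows essentially the same route as the paper: the paper's entire proof of Proposition 3 is the evaluation of the difference \eqref{dH} with $\delta\Psi=0$, which—exactly as you argue—kills the first-order term by criticality of $\lr{\Psi_0,\Theta_0}$ and every remainder carrying $\nabla\delta\Psi$, leaving the exact quadratic identity, after which conservation of $H_{\Omega}$ yields the stated chain of equalities. Your closing concern about whether $\Psi\equiv\Psi_0$ is actually maintained by \eqref{psitf2}, so that the conserved full energy coincides with $H_{\Omega}\lrs{\Psi_0,\Theta\lr{t}}$, is a legitimate subtlety that the paper leaves entirely implicit—it simply interprets ``perturbations of $\Theta_0$'' as motions in which only $\Theta$ varies—so flagging it is extra care rather than a deviation in method.
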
 
\noindent Again, the proof of proposition 3 follows by evaluation of the difference \eqref{dH} for $\delta\Psi=0$. 

We conclude this section by observing that the distance functions $\Norm{\Psi}^2_{\Theta}$ and $\Norm{\Theta}^2_{\Psi}$ behave as seminorms due to the degeneracy brought by the cross product. 
For example, the distance function $\Norm{\Psi}^2_{\Theta}$ is degenerate since 
$\Norm{\Psi}^2_{\Theta}=\Norm{\Psi+g\lr{\Theta}}^2_{\Theta}$ for any function $g\lr{\Theta}$. 
Such degeneracy can be removed by restricting the state space $\mf{X}$ 
to include only those functions $\Psi$ such that $\int_{\Sigma_{\Theta}}\Psi\,dS=0$ 
where $\Sigma_{\Theta}=\lrc{\bol{x}\in\Omega:\Theta\lr{\bol{x}}=c\in\mathbb{R}}$ is a level set of $\Theta$ with surface element $dS$. Then, $\Norm{\Psi}_{\Theta}^2=0$ 
if and only if $\Psi=g\lr{\Theta}$. On the other hand, $\int_{\Sigma_{\Theta}}\Psi\,dS=g\lr{\Theta}\Sigma_{\Theta}=0$ if and only if $g=0$. 
Alternatively, one may simply interpret propositions 2 and 3
as constraints on the size of $\nabla\lr{\Psi\lr{t}-\Psi_0}$ and $\nabla\lr{\Theta\lr{t}-\Theta_0}$ across $\nabla\Theta_0$ and $\nabla\Psi_0$ respectively (the gradients of the variations are constrained on the submanifolds defined by level sets of $\Theta_0$ and $\Psi_0$). 


\section{Construction of MHD equilibria by double bracket dissipation}
In this section we are concerned with the following system of two nonlinear PDEs in $\Omega$, 
\begin{subequations}
\begin{align}
\frac{\p\Psi}{\p t}=&\gamma A_0\nabla\cdot\lrs{\nabla\Theta\cp\lr{\nabla\Psi\cp\nabla\Theta}}+\gamma \mu_0A_0\lambda,\\
\frac{\p\Theta}{\p t}=&\sigma A_0\nabla\cdot\lrs{\nabla\Psi\cp\lr{\nabla\Theta\cp\nabla\Psi}},
\end{align}\label{dbMHD}
\end{subequations}
subject to $A_0\lr{\Psi}>0$ and the boundary conditions \eqref{bcpsith} and 
where the positive constants $\gamma,\sigma$ bear physical units. 
System \eqref{dbMHD} has the following properties:
\begin{proposition}
Steady states of system \eqref{dbMHD} correspond to MHD equilibria  
\begin{equation}
\lr{\nabla\cp\bol{B}}\cp\bol{B}=\mu_0\lambda\nabla\Psi,~~~~\nabla\cdot\bol{B}=0,\label{MHSdiss}
\end{equation}
with $\bol{B}=\nabla\Psi\cp\nabla\Theta$. 
Furthermore, the energy $H_{\Omega}$ is progressively dissipated 
\begin{equation}
\frac{dH_{\Omega}}{dt}\leq 0~~~~\forall t\geq 0.
\end{equation}
\end{proposition}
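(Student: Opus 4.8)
The plan is to recognize the double-bracket system \eqref{dbMHD} as a gradient flow of the Hamiltonian $H_{\Omega}$ of \eqref{HOm}, and then to read off both assertions directly from that structure. First I would record the functional derivatives of $H_{\Omega}$. Expanding the variation of $\frac{1}{2\mu_0}\abs{\nabla\Psi\cp\nabla\Theta}^2$, applying the scalar triple-product identity, and integrating by parts under the boundary conditions \eqref{bcpsith}, one obtains
\[
\mu_0\frac{\delta H_{\Omega}}{\delta\Psi}=-\nabla\cdot\lrs{\nabla\Theta\cp\lr{\nabla\Psi\cp\nabla\Theta}}-\mu_0\lambda,\qquad \mu_0\frac{\delta H_{\Omega}}{\delta\Theta}=-\nabla\cdot\lrs{\nabla\Psi\cp\lr{\nabla\Theta\cp\nabla\Psi}}.
\]
Substituting these into \eqref{dbMHD}, the source term $\gamma\mu_0 A_0\lambda$ in the $\Psi$-equation precisely cancels the $-\mu_0\lambda$ contribution, so the system collapses to the manifestly dissipative form $\p_t\Psi=-\gamma\mu_0 A_0\,\delta H_{\Omega}/\delta\Psi$ and $\p_t\Theta=-\sigma\mu_0 A_0\,\delta H_{\Omega}/\delta\Theta$.

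For the characterization of steady states I would set $\p_t\Psi=\p_t\Theta=0$ in this gradient-flow form. Since $A_0\lr{\Psi}>0$ and $\gamma,\sigma>0$, this is equivalent to $\delta H_{\Omega}/\delta\Psi=\delta H_{\Omega}/\delta\Theta=0$, i.e. to the pair of equations \eqref{MHS2} with the choice $f=\Psi$. These were already shown, upon inserting $\bol{B}=\nabla\Psi\cp\nabla\Theta$, to be equivalent to the MHD equilibrium equations \eqref{MHDE} with pressure $P=\lambda\Psi$, which is exactly \eqref{MHSdiss}. Thus the first claim follows with essentially no additional computation once the gradient-flow representation is in hand.

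For the dissipation estimate I would differentiate $H_{\Omega}$ along the flow and integrate by parts. The integrations by parts generate the boundary integrals $\int_{\p\Omega}\p_t\Psi\,\lr{\nabla\Theta\cp\bol{B}}\cdot\bol{n}\,dS$ and $\int_{\p\Omega}\p_t\Theta\,\lr{\bol{B}\cp\nabla\Psi}\cdot\bol{n}\,dS$, with $\bol{B}=\nabla\Psi\cp\nabla\Theta$. The first vanishes because $\Psi$ is held constant on $\p\Omega$ for all $t$, hence $\p_t\Psi=0$ there. The second vanishes because, using the expansion $\bol{B}\cp\nabla\Psi=\abs{\nabla\Psi}^2\nabla\Theta-\lr{\nabla\Psi\cdot\nabla\Theta}\nabla\Psi$ together with the fact that $\Psi={\rm constant}$ on $\p\Omega$ forces $\nabla\Psi\parallel\bol{n}$, the condition $\nabla\Theta\cdot\bol{n}=0$ in \eqref{bcpsith} kills both terms of $\lr{\bol{B}\cp\nabla\Psi}\cdot\bol{n}$. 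With the boundary terms gone, the chain rule reduces to $\frac{dH_{\Omega}}{dt}=\int_{\Omega}\lr{\p_t\Psi\,\delta H_{\Omega}/\delta\Psi+\p_t\Theta\,\delta H_{\Omega}/\delta\Theta}\,dV$, and inserting the gradient-flow form yields
\[
\frac{dH_{\Omega}}{dt}=-\mu_0\int_{\Omega}A_0\lr{\Psi}\lrs{\gamma\lr{\frac{\delta H_{\Omega}}{\delta\Psi}}^2+\sigma\lr{\frac{\delta H_{\Omega}}{\delta\Theta}}^2}dV\leq 0,
\]
the sign following from $A_0>0$ and $\gamma,\sigma>0$.

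I expect the boundary bookkeeping to be the only delicate step. The essential observation is the vanishing of $\lr{\bol{B}\cp\nabla\Psi}\cdot\bol{n}$ on $\p\Omega$, which is what licenses the clean chain rule without requiring $\p_t\Theta=0$ on the boundary; this is precisely where both conditions in \eqref{bcpsith} are needed. The remaining content—the double-cross-product expansions and the cancellation of the $\lambda$ term—is routine.
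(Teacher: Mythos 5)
Your proof is correct and follows essentially the same route as the paper: the paper derives \eqref{dbMHD} from the double-bracket form $\mu_0 A_0\,\mc{J}_s\Pi\mc{J}_s\,\delta H_{\Omega}$, which is exactly your gradient-flow representation, and then obtains the steady-state characterization and the negative-definite expression for $dH_{\Omega}/dt$ by the same integration-by-parts and boundary-term bookkeeping you perform. The only (harmless) discrepancy is your closing remark that both conditions in \eqref{bcpsith} are needed to kill $\lr{\bol{B}\cp\nabla\Psi}\cdot\bol{n}$: since $\Psi={\rm constant}$ on $\p\Omega$ forces $\bol{n}\cp\nabla\Psi=\bol{0}$, that triple product contains two parallel vectors and vanishes without invoking $\nabla\Theta\cdot\bol{n}=0$, which is how the paper argues it.
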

\noindent Hence, if solutions $\lr{\Psi,\Theta}$ exist in the limit $t\rightarrow +\infty$, 
they are nontrivial critical points of $H_{\Omega}$. 
We therefore suggest that the dissipative system \eqref{dbMHD} 
can be applied to numerically compute MHD equilibria with nested flux surfaces. 

To prove proposition 4, let us first explain how system \eqref{dbMHD} is derived. 
Consider an $n$-dimensional Hamiltonian system 
\begin{equation}
\dot{z}^i=\mc{J}^{ij}\frac{\p H}{\p z^j}~~~~i=1,...,n,
\end{equation}
where $\bol{z}=\lr{z^1,...,z^n}$ are the phase space coordinates, $H=H\lr{\bol{z}}$ the Hamiltonian function (energy) of the system, and $\mc{J}^{ij}=-\mc{J}^{ji}$, $i,j=1,...,n$, the components of the Poisson tensor $\mc{J}$. 
The antisymmetry of $\mc{J}$ ensures conservation of energy according to $\dot{H}=H_i\dot{z}^i=\mc{J}^{ij}H_iH_j=0$. Here, the notation $H_i=\p H/\p z^i$ was used. 
Applying the Poisson tensor twice to the Hamiltonian $H$
has the opposite effect: the dynamical system
\begin{equation}
\dot{z}^i=\mc{J}^{ij}g_{jk}\mc{J}^{k\ell}\frac{\p H}{\p z^\ell},\label{dbdiss}
\end{equation}
where $g_{jk}=g_{kj}$ are the components of a symmetric positive semi-definite covariant 2-tensor, 
dissipates the energy $H$
according to
\begin{equation}
\dot{H}=-\mc{J}^{ji}\frac{\p H}{\p z^i}g_{jk}\mc{J}^{k\ell}\frac{\p H}{\p z^\ell}\leq 0,
\end{equation}
where the inequality follows from the positive semi-definiteness of the tensor $g$. 
The type of relaxation described by system \eqref{dbdiss} is 
called `double bracket dissipation' \cite{PJMdb1,PJMdb2}, because it is obtained by repeated application of the Poisson tensor $\mc{J}$ defining the Poisson bracket $\lrc{f,g}=f_i\mc{J}^{ij}g_j$. 
Double bracket dissipation 
can be used as an efficient method to compute nontrivial steady states of ideal dynamical systems such as the Euler equations because the resulting  equations preserve the Casimir invariants spanning the kernel of the Poisson tensor \cite{Furukawa,Vallis}. 

Now recall that system \eqref{MHD7} has the Hamiltonian form \eqref{EoMH}, which can be equivalently written as
\begin{equation}
{
\begin{bmatrix}
\frac{\p\Psi}{\p t}\\\frac{\p\Theta}{\p t}
\end{bmatrix}
=\mu_0A_0\lr{\Psi}\mc{J}_s
\begin{bmatrix}
\frac{\delta H_{\Omega}}{\delta\Psi}\\
\frac{\delta H_{\Omega}}{\delta\Theta}
\end{bmatrix}
=\mu_0 A_0\lr{\Psi}\begin{bmatrix}0&1\\-1&0
\end{bmatrix}
\begin{bmatrix}
\frac{\delta H_{\Omega}}{\delta \Psi}\\\frac{\delta H_{\Omega}}{\delta\Theta}
\end{bmatrix}},\label{dbdiss0}
\end{equation}
where 
\begin{equation}
\mc{J}_s=
\begin{bmatrix}
0&1\\-1&0
\end{bmatrix},
\end{equation}
is the contravariant symplectic Poisson 2-tensor. 
Double bracket dissipation 
for system \eqref{dbdiss0} 
can therefore be obtained 
according to
\begin{equation}
{
\begin{bmatrix}
\frac{\p\Psi}{\p t}\\\frac{\p\Theta}{\p t}
\end{bmatrix}
=\mu_0A_0\lr{\Psi}\mc{J}_s\Pi\mc{J}_s
\begin{bmatrix}
\frac{\delta H_{\Omega}}{\delta\Psi}\\
\frac{\delta H_{\Omega}}{\delta\Theta}
\end{bmatrix}
=\mu_0 A_0\lr{\Psi}
\begin{bmatrix}0&1\\-1&0
\end{bmatrix}
\begin{bmatrix}\sigma&0\\0&\gamma
\end{bmatrix}
\begin{bmatrix}0&1\\-1&0
\end{bmatrix}
\begin{bmatrix}
\frac{\delta H_{\Omega}}{\delta \Psi}\\\frac{\delta H_{\Omega}}{\delta\Theta}
\end{bmatrix}},\label{dbdiss1}
\end{equation}
 where the constant diagonal covariant 2-tensor 
 \begin{equation}
\Pi=
\begin{bmatrix}
\sigma&0\\0&\gamma
\end{bmatrix},
 \end{equation}
 which plays the role of $g$ in eq. \eqref{dbdiss}, 
serves the purpose of keeping the consistency of physical units. 
One can verify that evaluation of \eqref{dbdiss1} gives the anticipated system 
\eqref{dbMHD}. 

Let us now examine the main properties of system \eqref{dbMHD}. 
Setting $\p\Psi/\p t=0$ and $\p\Theta/\p t=0$ it is immediately clear that the resulting solution $\lr{\Psi,\Theta}$ is such that the magnetic field $\bol{B}=\nabla\Psi\cp\nabla\Theta$ solves equation \eqref{MHSdiss}. 
Now consider the rate of energy change,
\begin{equation}
\begin{split}
\frac{dH_{\Omega}}{dt}=&\int_{\Omega}\lrs{\frac{1}{\mu_0}\nabla\Psi\cp\nabla\Theta\cdot\lr{\nabla\frac{\p\Psi}{\p t}\cp\nabla\Theta+\nabla\Psi\cp\nabla\frac{\p\Theta}{\p t}}-\lambda\frac{\p\Psi}{\p t}}dV\\
=&
\frac{1}{\mu_0}\int_{\p\Omega}\lr{\frac{\p\Psi}{\p t}\nabla\Theta\cp\bol{B}+\frac{\p\Theta}{\p t}\bol{B}\cp\nabla\Psi}\cdot\bol{n}dS
\\&+\frac{1}{\mu_0}\int_{\Omega}\lr{\frac{\p\Psi}{\p t}\nabla\cp\bol{B}\cdot\nabla\Theta-\frac{\p\Theta}{\p t}\nabla\cp\bol{B}\cdot\nabla\Psi}dV-\lambda\int_{\Omega}\frac{\p\Psi}{\p t}\,dV.
\end{split}
\end{equation}
The boundary condition $\Psi={\rm constant}$ on $\p\Omega$ implies $\p\Psi/\p t=0$ there. Furthermore, recall that the unit outward normal to $\p\Omega$ satisfies $\bol{n}\cp\nabla\Psi=\bol{0}$. Using \eqref{dbMHD} 
we thus find that
\begin{equation}
\begin{split}
\frac{dH_{\Omega}}{dt}=&\frac{1}{\mu_0}\int_{\Omega}\lrs{\gamma A_0\lr{\mu_0\lambda-\nabla\cp\bol{B}\cdot\nabla\Theta}\lr{\nabla\cp\bol{B}\cdot\nabla\Theta}-\sigma A_0\lr{\nabla\cp\bol{B}\cdot\nabla\Psi}^2}dV\\&-\lambda\int_{\Omega}\gamma A_0\lr{\mu_0\lambda-\nabla\cp\bol{B}\cdot\nabla\Theta}dV\\=&
-\frac{1}{\mu_0}\int_{\Omega}A_0\lrs{\gamma\lr{\mu_0\lambda -\nabla\cp\bol{B}\cdot\nabla\Theta}^2+\sigma\lr{\nabla\cp\bol{B}\cdot\nabla\Psi}^2}dV\\=&-\frac{1}{\mu_0}\int_{\Omega}A_0^{-1}\lrs{\gamma\lr{\frac{\p\Psi}{\p t}}^2+\sigma\lr{\frac{\p\Theta}{\p t}}^2}dV\leq 0.\label{ineq}
\end{split}
\end{equation}
In the last passage we used the hypothesis $A_0\lr{\Psi},\gamma,\sigma>0$. 
This inequality 
shows that if the limit
\begin{equation}
\lim_{t\rightarrow +\infty}\frac{dH_{\Omega}}{dt}=0,
\end{equation}
exists, the corresponding solutions $\lr{\Psi_{\infty},\Theta_{\infty}}=\lim_{t\rightarrow +
\infty}\lr{\Psi,\Theta}$ 
of the dissipative system \eqref{dbMHD} 
are nontrivial solutions of \eqref{MHSdiss}. 
System \eqref{dbMHD} therefore provides a dynamical 
method to search for MHD equilibria that correspond to critical points 
$\delta H_{\Omega}\lrs{\Psi_{\infty},\Theta_{\infty}}=0$
of $H_{\Omega}$. 
Note that the key difference between the Hamiltonian system \eqref{MHD7} 
and the dissipative system \eqref{dbMHD} is that they allow 
the search for critical points in the 
state space $\mf{X}$ along different directions. 
More precisely, in the former case  
if a steady solution is found it corresponds to a critical point
having the same value of $H_{\Omega}$ as that associated with initial conditions due to conservation of energy,  
while in the latter case steady solutions 
will exhibit a lower value of $H_{\Omega}$ 
as a consequence of \eqref{ineq}.





\section{Construction of MHD equilibria by iteration}
In this section we propose an iterative scheme to construct nontrivial MHD equilibria 
of the type \eqref{MHSdiss} in $\Omega$. 
The iterative scheme is based on the observation that  
nontrivial steady solutions of system \eqref{MHD7} are given by the following system of 2 coupled PDEs for the Clebsch potentials $\lr{\Psi,\Theta}$ in $\Omega$: 
\begin{subequations}
\begin{align}
&\nabla\cdot\lrs{\nabla\Psi\cp\lr{\nabla\Theta\cp\nabla\Psi}}=0,\label{itMHS1}\\
&\nabla\cdot\lrs{\nabla\Theta\cp\lr{\nabla\Psi\cp\nabla\Theta}}=-\mu_0\lambda.\label{itMHS2}
\end{align}\label{itMHS}
\end{subequations}
In general, taking 2 arbitrary functions $\lr{\Psi_0,\Theta_0}\in\mf{X}$ they will not
solve system \eqref{itMHS}. The idea is to introduce new Clebsch potentials 
$\Theta_0,\Psi_1,\Theta_1,\Psi_2,\Theta_2,...,\Theta_{i-1},\Psi_i,...$ by iteratively solving equations \eqref{itMHS1} and \eqref{itMHS2} in the volume $\Omega$ starting from some function $\Psi_0\lr{\bol{x}}$, that is 
\begin{subequations}
\begin{align}
\nabla\cdot\lrs{\nabla\Psi_0\cp\lr{\nabla\Theta_0\cp\nabla\Psi_0}}&=0,\label{Th0}\\
\nabla\cdot\lrs{\nabla\Theta_0\cp\lr{\nabla\Psi_1\cp\nabla\Theta_0}}&=-\mu_0\lambda,\\
\nabla\cdot\lrs{\nabla\Psi_1\cp\lr{\nabla\Theta_1\cp\nabla\Psi_1}}&=0,\\
\nabla\cdot\lrs{\nabla\Theta_1\cp\lr{\nabla\Psi_2\cp\nabla\Theta_1}}&=-\mu_0\lambda,\\
\nabla\cdot\lrs{\nabla\Psi_2\cp\lr{\nabla\Theta_2\cp\nabla\Psi_2}}&=0,\\
\vdots&\\
\nabla\cdot\lrs{\nabla\Theta_{i-1}\cp\lr{\nabla\Psi_i\cp\nabla\Theta_{i-1}}}&=-\mu_0\lambda,\label{Psii}\\
\nabla\cdot\lrs{\nabla\Psi_i\cp\lr{\nabla\Theta_i\cp\nabla\Psi_i}}&=0,\label{Thetai}\\
\vdots& 
\end{align}\label{itex}
\end{subequations}
and so on, so that $\lim_{i\rightarrow\infty}\lr{\Theta_{i-1},\Psi_i}=\lr{\Theta_{\infty},\Psi_{\infty}}$ hopefully converges to 
a regular solution $\lr{\Theta_{\infty},\Psi_{\infty}}$ of system \eqref{itMHS}. 
Here, we observe that the first equation in the iteration \eqref{Th0} serves the purpose of determining $\Theta_0\lr{\bol{x}}$ from the `initial condition' $\Psi_0$.
In this context, a possible choice of boundary conditions is \eqref{bcpsith}. 
The following theorem states that, if at each step of the iteration
solutions of a prescribed regularity exist, then 
the iteration converges to a solution of \eqref{itMHS} (and \eqref{MHSdiss}) with the same regularity. 
More precisely we have 


\begin{theorem} 
Assume $\mu_0\lambda\neq 0$ and consider an iterative scheme in which equation \eqref{itMHS1} and equation \eqref{itMHS2} 
are solved alternately in $\Omega$ 
\begin{subequations}
\begin{align}
&\nabla\cdot\lrs{\nabla\Theta_{i-1}\cp\lr{\nabla\Psi_i\cp\nabla\Theta_{i-1}}}=-\mu_0\lambda,\label{it1}\\
&\nabla\cdot\lrs{\nabla\Psi_{i}\cp\lr{\nabla\Theta_i\cp\nabla\Psi_{i}}}=0,~~~~i=1,2,3,...,\label{it2}
\end{align}\label{iteration}
\end{subequations}
starting from a given pair $\lr{\Theta_0\lr{\bol{x}},\Psi_0\lr{\bol{x}}}\in\mf{X}$ such that
\begin{equation}
\nabla\cdot\lrs{\nabla\Psi_0\cp\lr{\nabla\Theta_0\cp\nabla\Psi_0}}=0,\label{Th02} 
\end{equation}
with $\nabla\Psi_0\cp\nabla\Theta_0\neq\bol{0}$. 
Suppose that during the iteration solutions exist and are nontrivial, i.e. $\nabla\Psi_i\cp\nabla\Theta_i\neq\bol{0}$ for $i\geq 1$.  
Further assume that the limit 
\begin{equation}
\lr{\Theta_{\infty},\Psi_{\infty}}=\lim_{i\rightarrow +\infty}\lr{\Theta_{i-1},\Psi_i},\label{lim}
\end{equation}
exists. Then, the pair $\lr{\Theta_{\infty},\Psi_{\infty}}$ solves equation \eqref{itMHS}. Furthermore, the vector field 
$\bol{B}=\nabla\Psi_{\infty}\cp\nabla\Theta_{\infty}$ defines a nontrivial MHD equilibrium 
solving equation \eqref{MHSdiss}.
\end{theorem}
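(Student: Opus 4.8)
The plan is to prove the theorem by passing to the limit $i\to+\infty$ directly inside the two families of iteration equations \eqref{it1} and \eqref{it2}, and then invoking the equivalence between \eqref{itMHS} and \eqref{MHSdiss} already recorded in the discussion of \eqref{MHS2}. A preliminary remark is that although the limit \eqref{lim} only asserts $\Theta_{i-1}\to\Theta_\infty$ and $\Psi_i\to\Psi_\infty$, the sequence $\Theta_i$ is merely an index shift of $\Theta_{i-1}$ and hence converges to the same $\Theta_\infty$. Thus $\Theta_{i-1},\Theta_i\to\Theta_\infty$ and $\Psi_i\to\Psi_\infty$, which is exactly what is needed to take limits in both equations of the scheme.

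With this in hand, passing to the limit in \eqref{it1} yields $\nabla\cdot[\nabla\Theta_\infty\times(\nabla\Psi_\infty\times\nabla\Theta_\infty)]=-\mu_0\lambda$, which is \eqref{itMHS2}; passing to the limit in \eqref{it2} yields $\nabla\cdot[\nabla\Psi_\infty\times(\nabla\Theta_\infty\times\nabla\Psi_\infty)]=0$, which is \eqref{itMHS1}. Hence $\lr{\Theta_\infty,\Psi_\infty}$ solves \eqref{itMHS}. The delicate point, and the step I expect to be the main obstacle, is the justification of exchanging the limit with the second-order quasilinear operators appearing in the brackets: these are built from products of first and second derivatives of $\Psi_i$ and $\Theta_i$, so the convergence in \eqref{lim} must hold in a topology strong enough---say $C^2(\overline\Omega)$, or an $H^s$ with $s$ large enough that the relevant Sobolev embeddings and product estimates apply---for the nonlinear terms to converge to their formal limits. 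This is precisely the content to be extracted from the standing hypotheses that solutions of a prescribed regularity exist at each step and that the limit exists; they are to be read as the requirement that the convergence take place in such a regularity class, so that the passage to the limit is legitimate.

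It then remains to verify nontriviality, for which I would argue by contradiction using $\mu_0\lambda\neq0$. Expanding the bracket in \eqref{itMHS2} with the standard vector identity gives $\nabla\Theta_\infty\times(\nabla\Psi_\infty\times\nabla\Theta_\infty)=\abs{\nabla\Theta_\infty}^2\nabla\Psi_\infty-(\nabla\Theta_\infty\cdot\nabla\Psi_\infty)\nabla\Theta_\infty$. If $\bol{B}=\nabla\Psi_\infty\times\nabla\Theta_\infty$ were to vanish on some open subset of $\Omega$, then $\nabla\Psi_\infty$ and $\nabla\Theta_\infty$ would be collinear there, the right-hand side above would vanish identically on that set, and its divergence would therefore be zero---contradicting \eqref{itMHS2}, whose right-hand side is the nonzero constant $-\mu_0\lambda$. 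Note that here the nontriviality of the limit is not an additional assumption: whereas the intermediate hypothesis $\nabla\Psi_i\times\nabla\Theta_i\neq\bol{0}$ merely keeps the Clebsch representation meaningful along the iteration, the nontriviality of $\bol{B}$ at the limit is forced by $\mu_0\lambda\neq0$. Consequently $\bol{B}$ cannot vanish on any open set, so $\nabla\Psi_\infty\times\nabla\Theta_\infty\neq\bol{0}$.

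Finally, to conclude that $\bol{B}=\nabla\Psi_\infty\times\nabla\Theta_\infty$ is a genuine MHD equilibrium \eqref{MHSdiss}, I would invoke the equivalence already established for \eqref{MHS2} with the choice $f=\Psi$: upon substituting $\bol{B}=\nabla\Psi\times\nabla\Theta$, system \eqref{itMHS} is equivalent to $(\nabla\times\bol{B})\times\bol{B}=\mu_0\lambda\nabla\Psi$ together with $\nabla\cdot\bol{B}=0$, the latter holding automatically for any Clebsch field. Since $\lr{\Theta_\infty,\Psi_\infty}$ solves \eqref{itMHS} and is nontrivial, the associated $\bol{B}$ solves \eqref{MHSdiss}, which completes the argument.
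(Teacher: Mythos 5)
Your proof is correct, but it is not the paper's argument --- it is considerably more direct. The paper never uses the index-shift observation that $\Theta_i\rightarrow\Theta_{\infty}$ (which is what lets you pass to the limit in \eqref{it2} immediately). Instead it builds a monotone-energy argument: after normalizing $\int_{\Omega}\Psi_i\,dV=0$ so that $H_{ij}=H_{\Omega}\lrs{\Psi_i,\Theta_j}\geq 0$, it uses the second-variation identity \eqref{dH} to show that each half-step of the iteration lowers the energy by an explicit nonnegative quantity (yielding $H_{00}>H_{10}>H_{11}>H_{21}>\cdots>0$ unless the scheme terminates), deduces that the gaps $\Delta H_i=H_{i,i-1}-H_{ii}=\frac{1}{2\mu_0}\int_{\Omega}\abs{\nabla\lr{\Theta_{i-1}-\Theta_i}\cp\nabla\Psi_i}^2dV$ must tend to zero because a decreasing sequence bounded below cannot lose more than $H_{00}$ in total, concludes from this that $\Theta_{i-1}-\Theta_i\rightarrow g\lr{\Psi_{\infty}}$, and only then passes to the limit in the two equations. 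Your route buys simplicity: granted that the convergence in \eqref{lim} holds in a topology strong enough to commute with the quasilinear operators (an assumption the paper also makes tacitly, and which you flag more explicitly), the shifted pairs $\lr{\Psi_i,\Theta_i}$ converge to the same limit and both equations of \eqref{itMHS} follow at once. The paper's route buys information your argument does not provide: the energy monotonicity is a Lyapunov-type structure established \emph{without} invoking the limit hypothesis, which is what motivates expecting convergence at all; and it covers the early-termination alternative, in which two consecutive energies coincide and an exact solution has already been found after finitely many steps. Your closing steps --- nontriviality of $\bol{B}$ because the divergence in \eqref{itMHS2} cannot equal the nonzero constant $-\mu_0\lambda$ on any open set where $\nabla\Psi_{\infty}$ and $\nabla\Theta_{\infty}$ are collinear, and the reduction of \eqref{itMHS} to \eqref{MHSdiss} via the Clebsch identities with $f=\Psi$ --- coincide with the paper's.
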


\begin{proof}
We must show that the iteration procedure described above 
converges to 
the desired solution $\lr{\Theta_{\infty},\Psi_{\infty}}$ of \eqref{itMHS}.
We begin by noting that when a solution $\Psi_i$ of equation \eqref{it1} is found, 
the quantity
\begin{equation}
\tilde{\Psi}_i=\Psi_i-\int_{\Omega}\Psi_i\,dV,
\end{equation}
is also a solution of \eqref{it1}. Therefore, we can restrict the space of solutions $\mf{X}$ to those pairs $\lr{\Theta_{i-1},\Psi_i}$ such that
\begin{equation}
\int_{\Omega}\Psi_i\,dV=0.\label{avpsi}
\end{equation}
Next,  
we define
\begin{equation}
H_{ij}=H_{\Omega}\lrs{\Psi_i,\Theta_j}=\int_{\Omega}\lr{\frac{1}{2\mu_0}\abs{\nabla\Psi_i\cp\nabla\Theta_j}^2-\lambda\Psi_i}dV\geq 0,~~~~i,j=0,1,2,3,...,
\end{equation}
where $\lr{\Psi_i,\Theta_j}\in \mf{X}$, $i,j=0,1,2,3,...$, are determined in $\Omega$ iteratively according to equation \eqref{iteration}, and the last inequality follows from the property \eqref{avpsi}.  
Next, observe that setting 
$\Psi=\Psi_1$, $\Theta=\Theta_0$, 
$\delta\Psi=\Psi_0-\Psi_1$, and $\delta\Theta=0$ from equation \eqref{dH} one has 
\begin{equation}
H_{00}-H_{10}=\frac{1}{2\mu_0}\int_{\Omega}\abs{\nabla\delta\Psi\cp\nabla\Theta_0}^2dV\geq 0.
\end{equation}
Suppose that $H_{00}-H_{10}=0$. Then, $\Psi_1=\Psi_0+f\lr{\Theta_0}$ for some  function $f$ of $\Theta_0$. This implies that we have found a nontrivial solution 
of system \eqref{itMHS}
given by $\Theta=\Theta_0$ and $\Psi=\Psi_0$. This solution also defines an MHD equilibrium \eqref{MHSdiss} with $\bol{B}=\nabla\Psi_0\cp\nabla\Theta_0$.
Indeed, $\Theta_0$ is, by construction, a solution of \eqref{Th02}, while 
\begin{equation}
\nabla\cdot\lrs{\nabla\Theta_0\cp\lr{\nabla\Psi_1\cp\nabla\Theta_0}}=\nabla\cdot\lrs{\nabla\Theta_0\cp\lr{\nabla\Psi_0\cp\nabla\Theta_0}}=-\mu_0\lambda.
\end{equation}
We may therefore restrict our attention to the case $H_{00}>H_{10}$. 
In a similar manner, one finds 
\begin{equation}
H_{10}-H_{11}=\frac{1}{2}\int_{\Omega}\abs{\nabla\Psi_1\cp\nabla\delta\Theta}^2dV\geq0,
\end{equation}
 with $\delta\Theta=\Theta_0-\Theta_1$. Again, the case $H_{10}=H_{11}$ gives a nontrivial solution $\bol{B}=\nabla\Psi_1\cp\nabla\Theta_0$ of system \eqref{itMHS} since $\Theta_1=\Theta_0+g\lr{\Psi_1}$ for some smooth function $g\lr{\Psi_1}$. 
 Indeed, 
 \begin{subequations}
 \begin{align}
\nabla\cdot\lrs{\nabla\Theta_0\cp\lr{\nabla\Psi_1\cp\nabla\Theta_0}}&=-\mu_0\lambda\\
\nabla\cdot\lrs{\nabla\Psi_1\cp\lr{\nabla\Theta_1\cp\nabla\Psi_1}}&=\nabla\cdot\lrs{\nabla\Psi_1\cp\lr{\nabla\Theta_0\cp\nabla\Psi_1}}=0.
\end{align}
 \end{subequations}
Hence, either one finds a solution, or $H_{00}>H_{10}>H_{11}$. 
Repeating this procedure one may therefore construct a decreasing sequence 
\begin{equation}
H_{00}>H_{10}>H_{11}>H_{21}>H_{22}> ...>0,
\end{equation}
where the last inequality follows from the fact at each step of the iteration solutions are nontrivial (in particular $\nabla\Psi_i\cp\nabla\Theta_i\neq\bol{0}$ by hypothesis while $\nabla\Psi_i\cp\nabla\Theta_{i-1}\neq\bol{0}$ since $\mu_0\lambda\neq 0$ for all $i\geq 1$).
As explained above, the decreasing sequence may be interrupted if two contiguous steps possess the same energy, implying that a solution has been found after a finite number of iterations.  

Next, consider the pair $\lr{\Theta_{i-1},\Psi_i}$. 
We may quantify the degree at which  
$\lr{\Theta_{i-1},\Psi_i}$ fails to be a solution of system \eqref{itMHS} through the functional 
\begin{equation}
\Delta H_i=H_{ii-1}-H_{ii}=\frac{1}{2\mu_0}\int_{\Omega}\abs{\nabla\delta\Theta_i\cp\nabla\Psi_i}^2dV\geq 0,
\end{equation}
where $\delta \Theta_i=\Theta_{i-1}-\Theta_i$. Indeed, when $\Delta H_i=0$, one has $\delta\Theta_i=g\lr{\Psi_i}$ for some smooth function $g\lr{\Psi_i}$. 
Hence, 
\begin{subequations}
\begin{align}
\nabla\cdot\lrs{\nabla\Theta_{i-1}\cp\lr{\nabla\Psi_i\cp\nabla\Theta_{i-1}}}&=-\mu_0\lambda,\\
\nabla\cdot\lrs{\nabla\Psi_{i}\cp\lr{\nabla\Theta_{i-1}\cp\nabla\Psi_{i}}}&=\nabla\cdot\lrs{\nabla\Psi_{i}\cp\lr{\nabla\Theta_{i}\cp\nabla\Psi_{i}}}=0.
\end{align}
\end{subequations}
Now suppose that many iterations 
\eqref{iteration} are carried out, and that at each step $i$ the functional $\Delta H_i$ is evaluated.
We claim that
\begin{equation}
\lim_{i\rightarrow\infty}\Delta H_i=0.\label{limit}
\end{equation}
To see this, first note that $\Delta H_i<H_{00}<+\infty$ by construction. 
Next, let $\Delta H_n\in (0,H_{00})$ denote the value of $\Delta H_i$ at the $n$th iteration. Evidently, the number of times $\Delta H_i$ can be equal to or exceed  $\Delta H_n$ is limited to be at most
the natural number $N$ such that $N<H_{00}/\Delta H_n\leq N+1$. 
Hence, after a sufficiently large number $m$ of iterations we must have $\Delta H_i<\Delta H_n$ for all $i\geq m > n$. 
Now suppose that although the sequence $\Delta H_i$ is decreasing, its limit inferior does not reach zero, i.e. 
\begin{equation}
\liminf_{i\rightarrow\infty}\Delta H_i=\delta>0, 
\end{equation}
for some $\delta\in\mathbb{R}$. 
However, this is a contradiction, since 
one can always find a natural number $M$ 
such that $M\delta>H_{00}$. We must therefore conclude that
\begin{equation}
\liminf_{i\rightarrow\infty}\Delta H_i=0, 
\end{equation}
In particular, this implies that there exists some sufficiently large number of iterations $s$  
such that $0\leq \Delta H_s\leq\epsilon$
for any arbitrarily small $\epsilon >0$.
By the same argument as above, if $\Delta H_s>0$ the sequence $\Delta H_i$ can equal or exceed $\Delta H_s$ a finite number of times after which
\begin{equation}
\limsup_{i\rightarrow \infty}\Delta H_i=\Delta H_s\leq\epsilon.
\end{equation}
The limit of the sequence $\Delta H_i$ therefore belongs to $[0,\epsilon]$ for any arbitrarily small $\epsilon>0$. Equation \eqref{limit} thus follows.
This result implies 
\begin{equation}
\lim_{i\rightarrow\infty}\abs{\nabla\delta\Theta_i\cp\nabla\Psi_i}^2=0. 
\end{equation}
Recalling \eqref{lim}, it follows that    $\lim_{i\rightarrow\infty}\delta\Theta_i=g\lr{\Psi_{\infty}}$.
We therefore find 
\begin{subequations}
\begin{align}
\lim_{i\rightarrow\infty}\nabla\cdot\lrs{\nabla\Theta_{i-1}\cp\lr{\nabla\Psi_i\cp\nabla\Theta_{i-1}}}&=-\mu_0\lambda,\\
\lim_{i\rightarrow\infty}\nabla\cdot\lrs{\nabla\Psi_i\cp\lr{\nabla\Theta_{i-1}\cp\nabla\Psi_i}}&=\lim_{i\rightarrow\infty}\nabla\cdot\lrs{\nabla\Psi_{i}\cp\lr{\nabla\Theta_i\cp\nabla\Psi_{i}}}=0.
\end{align}
\end{subequations}
Hence, the pair
\begin{equation}
\lr{\Theta_{\infty},\Psi_{\infty}}=\lim_{i\rightarrow\infty}\lr{\Theta_{i-1},\Psi_i}, 
\end{equation}
defines a solution of system \eqref{itMHS}. 
Furthermore, $\mu_0\lambda\neq 0$ in \eqref{itMHS} implies that the vector field $\bol{B}=\nabla\Psi_{\infty}\cp\nabla\Theta_{\infty}$ is non-vanishing and that it is a    
solution of the MHD equilibrium equations \eqref{MHSdiss}. 

Finally, we observe that if a solution $\lr{\Theta_{n-1},\Psi_{n}}$ of system \eqref{itMHS} is found 
after a finite number of iterations $n$, successive iterations will simply return the same solution, e.g.  $\lr{\Theta_{n},\Psi_{n+1}}=\lr{\Theta_{n-1},\Psi_{n}}$. Hence, in this case $\lr{\Theta_{\infty},\Psi_{\infty}}=\lr{\Theta_{n-1},\Psi_n}$.
\end{proof}


\begin{remark}
If $\Omega$ is a hollow toroidal volume with boundary $\p\Omega$ 
corresponding to 2 distinct level sets of a smooth function $\Psi_0\in C^{\infty}\lr{\Omega}$, with $\nabla\Psi_0\neq\bol{0}$ in $\Omega$, and if
level sets of $\Psi_0$ foliate $\Omega$ with nested toroidal surfaces, 
theorem 1 of \cite{SY23} ensures that equation \eqref{Th02} always has a nontrivial solution $\Theta_0$ such that $\nabla\Psi_0\cp\nabla\Theta_0\neq\bol{0}$. 
Furthermore, the angle variable $\Theta_0$ is not unique, but solutions exist in the form $\Theta_0=M\mu+N\nu+\chi_0$, where $\mu,\nu$ are toroidal and poloidal angle variables, the integers $M,N$ determine the rotational transform of the vector field $\nabla\Psi_0\cp\nabla\Theta_0$, and the function $\chi_0\lr{\bol{x}}$ is single-valued. The same result applies when solving for $\Theta_i$ at any step \eqref{it2} of the iteration provided that $\Psi_i$ satisfies the same properties listed above for $\Psi_0$ in $\Omega$.   
\end{remark}

\begin{remark}
An argument analogous to that used in the proof of theorem 1 in \cite{SY23} 
shows that for a given angle variable $\Theta_{i-1}$ in \eqref{it1}, 
a solution ${\Psi_i}$ can be obtained by reducing equation \eqref{it1} 
to a 2-dimensional elliptic equation on each level set of $\Theta_{i-1}$ 
and by joining solutions corresponding to adjacent level sets.  
\end{remark}

\begin{remark}
In light of remarks 1 and 2 above, if one could show that at each step of the iteration the solutions $\Theta_{i-1}$ and $\Psi_i$, $i\geq 1$ preserve their properties (in particular, $\Theta_i$ remains an angle variable 
and $\Psi_i$ foliates the domain with nested toroidal surfaces) then, combining this result with theorem 1 proved in this section, one would have obtained a proof of the existence of MHD equilibria \eqref{MHSdiss} 
in hollow toroidal volumes of arbitrary shape.
In such construction, although no control is available on the form of the flux surfaces $\Psi_{\infty}$ within $\Omega$, one can conjecture that, 
if they exist, solutions $\bol{B}=\nabla\Psi_{\infty}\cp\nabla\Theta_{\infty}$ 
with different rotational transforms can be obtained by appropriate choice of the integers $M,N$ mentioned in remark 1.
\end{remark}





\section{Concluding remarks}

In this study, starting from the ideal MHD equations  \eqref{MHD1} and with the aid of Clebsch potentials, 
we derived a reduced set of equations \eqref{MHD2}, \eqref{MHD6}, as well as \eqref{MHD7} describing the nonlinear evolution of magnetic field turbulence 
in proximity of MHD equilibria \eqref{MHDE}. 
The ordering 
\eqref{ordering} used to arrive at these equations is appropriate for a plasma with small flow, small electric current, approximate flux surfaces, and slow time variation. This setting is expected to be relevant for stellarator plasmas. The same governing equations can be obtained under the more general ordering \eqref{ordering2} in which both the plasma flow and the electric current are not small. 
We showed that the reduced equations possess invariants. In particular, the closed system \eqref{MHD7} preserves magnetic energy, magnetic helicity, and total magnetic flux under suitable boundary and gauge conditions. 
Furthermore, it exhibits 
a noncanonical Hamiltonian structure with Poisson bracket \eqref{PB} and Hamiltonian \eqref{HOm} (proposition 1 of section 4).
Such Hamiltonian structure can be used to examine the stability properties of steady solutions: we found that MHD equilibria \eqref{MHSdiss} are nonlinearly stable against perturbations involving a single Clebsch potential in the sense of propositions 2 and 3 of section 5.
The Hamiltonian structure can also be applied to obtain a dissipative dynamical system \eqref{dbMHD} with the property that the Hamiltonian of the system is progressively dissipated as described by proposition 4 of section 6.  
System \eqref{dbMHD}, which comprises two coupled diffusion equations for the Clebsch potentials, thus provides a dynamical method to compute nontrivial MHD equilibria \eqref{MHSdiss} by minimizing the Hamiltonian \eqref{HOm}. 
We further proposed a second scheme to compute MHD equilibria \eqref{MHSdiss} based on the iterative solution of the two 
coupled equations \eqref{itMHS}. 
Here, theorem 1 shows that, if solutions exists at each step of the iteration, 
the process must converge toward a solution of system \eqref{itMHS} and thus to a nontrivial MHD equilibrium of the type \eqref{MHSdiss}.

The reduced equations derived in the present paper can be regarded as a toy model of turbulence that can be useful to assess dynamical accessibility and stability of MHD equilibria in physically relevant regimes. Furthermore,  they provide two practical approaches (a dissipative one and an iterative one) to numerically compute MHD equilibria. 
Finally, as outlined in the remarks at the end of section 7, we conjecture that the iterative scheme of section 7 may 
represent the basis for a mathematical proof of the existence of 
MHD equilibria with a non-vanishing pressure gradient in hollow tori of arbitrary shape (that is, configurations in which the boundary $\p\Omega$ is not invariant under some combination of Euclidean isometries). 


\section*{Acknowledgment}
N.S. would like to thank Z. Yoshida for useful discussion.  

\section*{Statements and declarations}

\subsection*{Data availability}
Data sharing not applicable to this article as no datasets were generated or analysed during the current study.

\subsection*{Funding}
The research of NS was partially supported by JSPS KAKENHI Grant No. 21K13851.
and 22H04936.

\subsection*{Competing interests} 
The authors have no competing interests to declare that are relevant to the content of this article.



\end{document}